\definecolor{gate_color_H}{rgb}{0.92,0.49,0.43} 
\definecolor{gate_color_X}{rgb}{1,0.72,0.22} 
\definecolor{gate_color_xor}{rgb}{0.94,0.86,0.44}  
\definecolor{gate_color_im}{rgb}{0.751, 0.867, 0.79}
\definecolor{gate_color_and}{rgb}{0.67,0.87,0.94}
\definecolor{gate_color_mtc}{rgb}{0.32,0.65,0.75}    
\definecolor{gate_color_crot}{rgb}{0.13,0.52,0.63}
\definecolor{gate_group_color_a}{rgb}{0.8, 0.8, 0.8}
\definecolor{gate_group_color_b}{rgb}{0.9, 0.9, 0.9}
\newtheorem{theorem}{Theorem}[section]
\newtheorem{lemma}[theorem]{Lemma}
\newcommand{\vc}{\bm}
\begin{document}


\title[]{Fast Quantum Amplitude Encoding of Typical Classical Data}%

\author{Vittorio Pagni \orcidlink{0009-0006-9753-3656}}
\affiliation{Institute of Software Technology, German Aerospace Center (DLR), Sankt Augustin, Germany}
 \affiliation{ University of Cologne, Cologne, Germany}
\author{Sigurd Huber \orcidlink{0000-0001-7097-5127}}
\affiliation{Microwaves and Radar Institute, German Aerospace Center (DLR), Oberpfaffenhofen, Germany}
\author{Michael Epping \orcidlink{0000-0003-0950-6801}}
\affiliation{Institute of Software Technology, German Aerospace Center (DLR), Sankt Augustin, Germany}
\author{Michael Felderer \orcidlink{0000-0003-3818-4442}}
 \affiliation{Institute of Software Technology, German Aerospace Center (DLR), Sankt Augustin, Germany}
 \affiliation{ University of Cologne, Cologne, Germany}
\date{\today}

\begin{abstract}
We present an improved version of a quantum amplitude encoding scheme that encodes the $N$ entries of a unit classical vector $\vc{v}=(v_1,..,v_N)$ into the amplitudes of a quantum state.
Our approach has a quadratic speed-up with respect to the original one. We also describe several generalizations, including to complex entries of the input vector and a parameter $M$ that determines the parallelization.
The number of qubits required for the state preparation scales as $\mathcal{O}(M\log N)$.
The runtime, which depends on the data density $\rho$ and on the parallelization paramater $M$, scales as $\mathcal{O}(\frac{1}{\sqrt{\rho}}\frac{N}{M}\log (M+1))$, which in the most parallel version ($M=N$) is always less than $\mathcal{O}(\sqrt{N}\log N)$.

By analysing the data density, we prove that the average runtime is $\mathcal{O}(\log^{1.5} N)$ for uniformly random inputs. We present numerical evidence that this favourable runtime behaviour also holds for real-world data, such as radar satellite images.
This is promising as it allows for an input-to-output advantage of the quantum Fourier transform.
\end{abstract}

\keywords{Suggested keywords}
\maketitle

\section{Introduction}
Quantum computers have the potential to provide significant speedups over many classical algorithms, such as unstructured search, prime factorization, and the discrete Fourier transform. In particular, the quantum Fourier transform (QFT) offers exponential speedup compared to its classical counterpart \cite{grover1996fast, Shor_1997, coppersmith, Camps_2020}.

Although the theoretical advantages of these approaches have been demonstrated, applying efficient quantum algorithms to classical data requires first encoding it into a quantum state. The lack of an efficient state preparation algorithm can become a computational bottleneck, undermining any improvement. This challenge is particularly relevant in applications that require large classical input vectors, such as machine learning and satellite image analysis.

In this work, we consider the analysis of spectral properties of large images captured by synthetic aperture radar (SAR), see Fig.~\ref{initial_image} as an example of an engineering application that can benefit from the QFT.
\begin{figure}[tbp]%
    \centering%
    \reflectbox{\rotatebox[origin=c]{180}{\includegraphics[width=8cm]{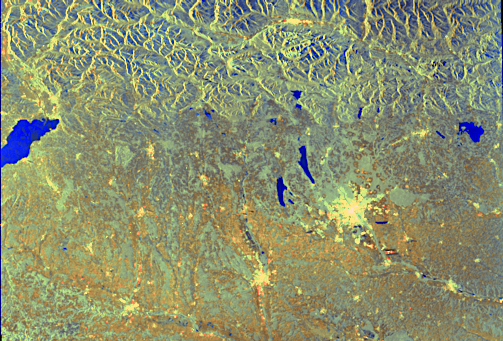}}}%
    \caption{Rescaled image of a part of Germany taken by the satellite Sentinel-1A using a SAR interferometric radar
instrument \cite{CopernicusSentinel1}. Sentinel-1 is a European radar mission developed for the Copernicus program, a joint initiative of the European Commission (EC) and the European Space Agency (ESA).}%
    \label{initial_image}%
\end{figure}%
SAR is a radar technology that uses microwaves to produce detailed maps of objects or even entire planetary surfaces \cite{Curlander_1991}. There is a rich body of literature on various processing techniques, including the works of Soumekh \cite{Soumekh_1994} and Cumming \cite{Cumming_2005}. Many SAR focusing algorithms, such as the $\omega$-$k$ algorithm \cite{78293, Waller_Keil_Friederich_2023} or the range-Doppler algorithm \cite{158864}, rely on the discrete Fourier transform (DFT) and could, in principle, be accelerated by the QFT. For regular SAR imagery, this involves transforming two-dimensional, complex-valued SAR raw data into the spectral domain. Depending on the frequency band and SAR system specifications, a single dataset can consist of samples on the order of $10^{9}$, corresponding to approximately 15 GB of complex double-precision data. Each sample, which is usually stored as a 2D array, can be flattened as a 1D array.

As previously mentioned, encoding such large input vectors into a quantum state is a challenging task, as it requires an efficient generic state preparation algorithm. In particular, the circuit depth and classical runtime (the time that is needed to find the correct circuit) must scale slowly enough to preserve the quantum speed up.
State preparation of quantum states is a fast developing field of quantum computing, with important recent works such as \cite{shukla2023efficient} where the authors introduce an algorithm to efficiently prepare a uniform superposition of the first $1\leq K\leq 2^n$ basis states in the computational basis with an $\mathcal{O}(\log(K))$ circuit depth
\begin{equation}
\ket{\Psi_K}=\frac{1}{\sqrt{K}}\sum_{j=0}^{K-1}\ket{j}.
\end{equation}
Even though this procedure cannot be used to prepare an arbitrary state, it provides an efficient building block for other algorithms and gives some insight on what makes certain state more difficult to prepare than others.

In \cite{Gleinig} the authors present an efficient algorithm for the preparation of sparse states with $m, 0 \leq m\leq N$ non-zero entries which has a runtime of $\mathcal{O}(m^2 \log(m)n)$ and requires $\mathcal{O}(m n)$ qubits.

In \cite{UniformQuantumStatesBoolean} Mozafari et al. introduce an alternative approach based on (reduced) decision diagrams which outperformed state-of-the-art algorithms at preparing generic n-qubit states with $m=n^3$ non zero amplitudes.
In their work, the difficulty of preparing a state, expressed in terms of circuit complexity, is linearly proportional to the amount of paths in the decision diagram corresponding to it. This number is upper-bounded by the sparsity m (the amount of non zero amplitudes) and therefore the algorithm is efficient at preparing sparse states. Moreover, even certain non-sparse states that can be described by a simple decision tree exhibit a relatively small circuit depth.

In this paper we present an improved version of a protocol that encodes the real or more generally complex entries of a classical unit vector $\vc{v}=(v_1,..,v_N)$ of size $N$ into the amplitudes of a quantum state
\begin{equation}
\label{amplitude_encoding_state}
    \ket{\Psi}=\sum_{i=0}^{N-1}{w_i \ket{i}},
\end{equation}
where $n=\log_2(N)$ is the number of qubits and the entries of the vector $\vc w$ approximate those of the input vector $\vc v $ up to a fixed binary precision $L \in \mathbb{N}$.

The key element of the procedure, which is based on the abstract description given by Sahel Ashhab \cite{PhysRevResearch.4.013091}, is the encoding circuit. After an efficient pre-processing that prepares the classical data for a faster encoding, this circuit turns the entries of the input vector into quantum amplitudes by means of controlled rotations.

The depth of the circuit depends on the data density \cite{PhysRevResearch.4.013091}
\begin{equation}
\rho(\vc{w})=\frac{1}{N}\sum_{i=0}^{N-1}{\left(\frac{w_i}{\|w\|_\infty}\right)^2}
\label{eq:definitionofrho}
 \end{equation}
 of the (approximated) input vector, where $\|\vc w\|_\infty=\underset{i}{\max}|w_i|$ is the max norm.
 Note that the data density is bounded by 
\begin{equation}
\frac{1}{N}   \leq \rho \leq 1.
\end{equation}

In Section~\ref{main_results}, we introduce our version of the encoding algorithm, highlighting the most relevant differences from previous approaches and we then analyze the runtime scaling. At the end of this section we show the results of our case study, where we investigate the scaling of the local data density $\rho$ in the EOWEB GeoPortal dataset. In Section~\ref{Discussion} we discuss the limitations and strengths of our approach.
This is followed by a description of the main components of the encoding circuit and their implementation in Section~\ref{methods}.
The proof of the runtime scaling, the state evolution and additional details can be found in the supplementary material. 

\section{Results} 
\label{main_results}
\subsection{An improved state preparation algorithm that uses amplitude amplification for post-processing}
The encoding block that we use is based on \cite{PhysRevResearch.4.013091}.
 To trade memory for execution time, we introduce a parameter M, which determines how many entries of the input vector are encoded in parallel. This can be useful for small devices, where memory is the limiting factor.
The original encoding procedure has a success probability which is equal to the density $\rho$ of the encoded vector. The system register contains the correct superposition only if we measure the value 1 on a specific ancilla qubit, which we call the FLAG, whereas for 0 we have to repeat the encoding process from the beginning. This means that we have to run the circuit and perform a measurement on average $\frac{1}{\rho}$ times before we obtain the desired encoding.

In our circuit we use the amplitude amplification algorithm to boost the amplitude related to the state which carries the right superposition. As usual when applying Grover's operator, which is the core step that is repeated in the amplitude amplification algorithm \cite{Brassard_2002}, see supplementary material~\ref{AppendixD}, the procedure requires a number of iterations $\propto \frac{1}{\sqrt{\rho}}$ to provide the right superposition with a probability greater than $\frac{1}{2}$.

We now give details for the algorithm, see also Algorithm~\ref{shallow_amplitude_encoder}, and describe its action for a real input vector of length N. As we will show in Section~\ref{generalizationToCompl}, this is not a limitation since the procedure can be extended to complex vectors with the same runtime.

To simplify the discussion, we will assume that $N$ is a power of 2 and define 
\begin{equation}
    n=\log_2(N).
\end{equation}
The algorithm consists of three main blocks:
\begin{enumerate}
\item Classical pre-processing
\item Quantum encoding circuit
\item Quantum amplitude amplification
\end{enumerate}
The pseudo code is presented in Algorithm~\ref{shallow_amplitude_encoder}.
\begin{algorithm}[!htbp]
\caption{Shallow Amplitude Encoder}
 \label{shallow_amplitude_encoder}
\flushleft\textbf{Input:}
\begin{itemize}
 \item  $\vc{v} \in \mathbb{R}^N$  vector to encode. This vector, as any other digital data point, is already stored as an (approximated) binary representation in the classical memory. Here we assume the precision of this initial representation to be much larger than the one of the ouput.
\item $L \in \mathbb{N}$ the amount of bits in the binary encoding of each entry of the angle vector $\theta$ introduced below in Section~\ref{classical_preprocessing}. It is directly related to the accuracy of the approximation of the entries of $\vc{v}$ by the amplitudes in the final output state.
\item $M \in \{ 1, .. N\}$  number of encoding operations to perform in parallel at each repetition of the encoding step. The most parallel and therefore the fastest and most memory consuming version of the algorithm corresponds to $M=N$.
\item  $ \mathcal{O}(nM)$ qubits initialized in the $\ket{0} $ standard basis state, respectively for the SYS and for the other registers, that are used as ancillas.
\end{itemize}
\textbf{Output}:\begin{itemize}
 \item $\ket{\Psi}=\sum_{i=0}^{2^n-1}{w_i \ket{i}}$ the n-qubit state whose amplitudes encode the entries of the approximated input vector $\vc w$ with precision at least $2^{-L}$.
\end{itemize}
\begin{algorithmic}[1]
\STATE $B \gets\text{vecToBinaryMatix}(\vc{ v},L)$
\STATE $\mathcal{E}\gets\text{EncodingCircuit}(B,M)$
\STATE $n_\text{Ampl}\gets\left\lfloor \frac{\pi}{4 \cdot \arcsin\left(\sqrt{\rho}\right)} \right\rfloor
$
\STATE $\ket{\Psi}_\text{SYS}\ket{0}_\text{Anc}\gets\mathcal{E}(S_0\mathcal{E}^+S\mathcal{E})^{n_\text{Ampl}}\ket{0}_\text{SYS,Anc}$
\STATE Discard (measure) all the ancilla qubits and keep the first $n$.  
    \STATE \RETURN $\ket{\Psi}_\text{SYS}$
\end{algorithmic}
\end{algorithm}
\subsection{Classical pre-processing}
\label{classical_preprocessing}
\begin{algorithm}[!tpbh]
\caption{vecToBinaryMatrix \label{classical pre-rocessing}}
\flushleft\hspace*{\algorithmicindent}\textbf{Input:}
\begin {itemize} 
\item A real vector $\vc{v} \in \mathbb{R}^N$, $N=2^n$
\item the encoding precision $L \in \mathbf{N}$
\end{itemize} 
\hspace*{\algorithmicindent} \textbf{Output:} a binary matrix $B\in M_{N,L}(\mathbb{Z}_2)$.
\begin{algorithmic}[1] 
    \FOR{each $v_i \in \vc{v}$}
        \STATE $\theta_i \coloneqq \arcsin \left( \frac{v_i}{\|\vc{v}\|_\infty} \right) \bigg/ \frac{\pi}{2}$
        \STATE set $B_{i,j} $ such that $\theta_i=(-1)^{B_{i,0}}\sum_{j=1}^{L-1} 2^{-j} B_{i,j} $
    \ENDFOR
    
    \RETURN B
\end{algorithmic}
\end{algorithm}
The first step consists of a classical pre-processing, see  Algorithm~\ref{classical pre-rocessing}. It computes a vector $\vc w$ which approximates the input vector $\vc v$ with a given precision $L \in \mathbb{N}$, and encodes into a binary matrix $B \in M_{N,L}(\mathbb{Z}_2)$, that is calculated via an intermediate angle vector $\vc{\theta}$ with entries
\begin{equation}
   \theta_i := \frac{2}{\pi}\arcsin \left( \frac{v_i}{\|\vc{v}\|_\infty} \right).
\end{equation}
Each row $B_i,i\in \{0,\dots,N-1\}$ is the binary encoding of the corresponding entry $\theta_i$ of the angle vector $\vc{\theta}$ associated to $\vc{v}$
\begin{equation}
    \theta_i=(-1)^{B_{i,0}}\sum_{j=1}^{L-1} 2^{-j} B_{i,j} .
\end{equation}
This operation is completely parallelizable over the entries of the input vector.

We illustrate the pre-processing with the following example, for which we choose a vector $\vc v$ of length $n=3$ and $L=6$,
\begin{equation}
\label{example_vec}
    \vc{v}=\frac{1}{\sqrt{20}}(1,2,-1,2,-1,2,1,2),
\end{equation}
 then the angle vector $\vc{\theta}$ and the binary matrix $B$ are 
\begin{equation}
    \vc{\theta}=(\frac{1}{3}, 1, -\frac{1}{3}, 1, -\frac{1}{3}, 1, \frac{1}{3}, 1) 
\end{equation}
and
\begin{equation}
B = \begin{bmatrix}
0 & 0 & 1 & 0 & 1 & 1 \\
0 & 1 & 1 & 1 & 1 & 1 \\
1 & 0 & 1 & 0 & 1 & 0 \\
0 & 1 & 1 & 1 & 1 & 1 \\
1 & 0 & 1 & 0 & 1 & 0 \\
0 & 1 & 1 & 1 & 1 & 1 \\
0 & 0 & 1 & 0 & 1 & 1 \\
0 & 1 & 1 & 1 & 1 & 1
\end{bmatrix}.
\end{equation}

\subsection{Quantum encoding circuit \texorpdfstring{$\mathcal{E}$}{E}}

The key component of the algorithm is the circuit block $\mathcal{E}$ which performs the (approximated) encoding of each entry $v_i$ as the amplitude of the correspondent computational basis state $ \ket{i}$ in the final superposition, see Eq.~(\ref{amplitude_encoding_state}).

\subsubsection{Structure}
\begin{figure*}[tbp]
    \centering
    \input{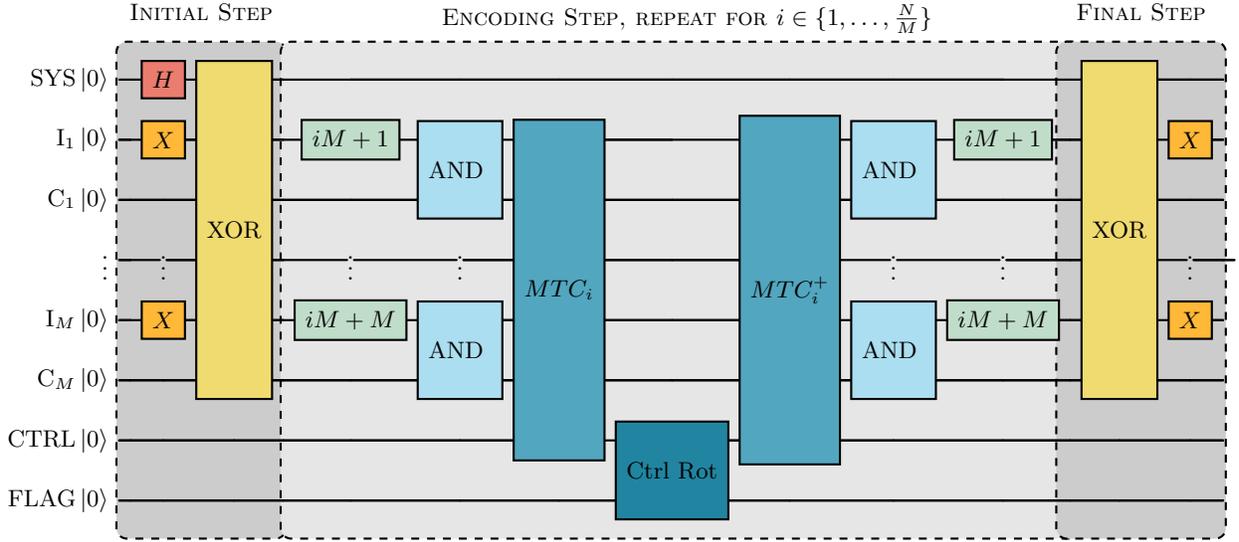}
    \caption{Visual representation of the encoder circuit $\mathcal{E}$. It is divided into the initial step, which is performed only once, the encoding step that is repeated an amount of times $\frac{N}{M}$, and the final step.}
    \label{circuit_figure}
\end{figure*}

The structure of the encoding circuit presented in Fig.~\ref{circuit_figure} can be divided into three main steps:
an initial step, which is performed only once for each encoding operation independently of the choice of parallelization parameter $M$. It has the role of preparing a uniform superposition of basis states on the SYS register and entangling it with the ancilla registers using a series of C-NOTs (Fan-Out) that act as a parallel XOR operation between the SYS register and each ancilla Index register. 

This step is followed by the actual parallel encoding of $M$ entries of the vector at a time. For the sake of a clear presentation, we assume that $N=M\cdot R$ with $R,M,N \in \mathbb{N}$, such that this operation is repeated $R$ times. 

Finally, there is a third step that disentangles the SYS and the FLAG registers from all the ancilla registers (uncomputation).

\subsubsection{Registers}
There are seven different quantum registers: the most relevant one is the output register SYS, which will carry the right superposition that represents the amplitude encoding of $\vc{v}$. The other six types are just ancilla registers that are necessary to perform the encoding of different amplitudes in parallel.
They are described in Table~\ref{register_table}.
\begin{table}[tbph]
\label{register_table}
    \centering
    \begin{tabular}{|c|c|c|p{3.5cm}|}
    \hline
    Name & Size & Amount& Description \\[0.5ex]
    \hline
    SYS & $n$ & $1$ &The output register that stores the amplitude encoding. \\
    \hline
    Index  & $n$ &$M$& Initialized to the basis state corresponding to their index. \\
    \hline
    Parity compression & $1$ &$M$ &Stores the parity information for compression. \\
    \hline
    CTRL & $L$ &$1$& Control register used in the fan-in operation. \\
    \hline
    FLAG & 1 &1& Target of the controlled rotations; indicates whether the encoding was successful. \\
    \hline
    \end{tabular}
    \caption{Register details and their descriptions. The total amount of qubits is $N_\text{Tot}=n(1+M)+M+L+1+\mathcal{O}(M)=\mathcal{O}(nM)$.}
\end{table}
The circuit also requires an additional amount of ancilla qubits that do not change the $\mathcal{O}(Mn)\leq \mathcal{O}(Nn)$ scaling of the memory but that can vary significantly with the implementation of each logical step of the circuit in terms of their decomposition, as shown in more detail in Section~\ref{methods}.
\subsubsection{Gates}
The circuit uses simple Pauli gates such as $X$ and $iM+1$, the Hadamard gate $H$ as well more complex ones from the set
\begin{equation}
    \mathcal {G}=\{\text{XOR}, \text{AND},\text{MTC}_i,\text{CTRL Rot}\}.
\end{equation}
A description of each gate in this set can be found in Section~\ref{methods}.

\subsection{State Evolution}
\label{state_evolution}
The main goal of the encoding circuit is to use controlled $R_y$ rotations, controlled by the CTRL register and targeting the FLAG, in order to associate to each basis state $\ket{i}$ the correct amplitude $\vc {w}_i$. This operation is performed in superposition and in two main steps: first a uniform superposition of all the basis states is prepared, where the parity compression register $C_1,\dots,C_N$ has only the relative sub-register $C_i$ in the one state, then the controlled rotations turn the uniform superposition of basis states into a weighted one, where the weights are the entries of $\vc{v}$.
The uniform superposition 
\begin{equation}
\begin{aligned}
\label{uniform}
\ket{\Psi_\text{uniform}}=&\frac{1}{\sqrt N}\sum_{k=0}^{N-1} \{ \ket{k}_{\mathrm{SYS}}\bigotimes_{j=0}^{N-1}\ket{1^{\otimes n}\oplus k \oplus j}_{I_j}\ket{\delta_{k,j}}_{C_j}\\
&\bigotimes_{r=0}^{L-1}\ket{B_{k,r}}_{\mathrm{CTRL}}\ket{0}_{FLAG}\}
\end{aligned}
\end{equation}
becomes 
\begin{align}
\ket{\Psi_\text{weighted}}=&\frac{1}{\sqrt N}\sum_{k=0}^{N-1} \{ \ket{k}_{\mathrm{SYS}}\bigotimes_{j=0}^{N-1}\ket{1^{\otimes n}\oplus k \oplus j}_{I_j}\ket{\delta_{k,j}}_{C_j}\nonumber\\
&\bigotimes_{r=0}^{L-1}\ket{B_{k,r}}_{\mathrm{CTRL}}\big((-1)^{B_{k,0}} \sqrt{1 - c_k^2}\ket{0}_{\mathrm{FLAG}}+\\
&+c_k\ket{1}\big)_{\mathrm{FLAG}}\}\label{weighted_sup}\nonumber
\end{align}
which can be written as
\begin{equation}
\ket{\Psi_\text{weighted}}=\sqrt{1-\rho}\ket{\Psi_{(0)}}\ket{0}_{\mathrm{FLAG}}+\sqrt{\rho}\ket{\Psi_{(1)}}\ket{1}_{\mathrm{FLAG}},
\end{equation}
where
\begin{equation}
\begin{aligned}
\ket{\Psi_{(1)}}=&\sum_{k=0}^{N-1}w_k \ket{k}_{SYS}\bigotimes_{j=0}^{N-1}[\ket{1^{\otimes n}\oplus k \oplus j}_{I_j}\\
&\ket{\delta_{k,j}}_{C_j}]\bigotimes_{r=0}^{L-1}\ket{B_{k,r}}_{CTRL}
\end{aligned}
\end{equation}
is almost the desired encoding state, except for the fact that it is still entangled with the ancilla registers. The rest of the operations separate these registers from the SYS and the FLAG, giving the final state
\begin{equation}
\label{weighted_superposition}
\ket{\Psi_\text{weighted}}=\sqrt{1-\rho}\ket{\Psi_{B}}\ket{0}_{FLAG}+\sqrt{\rho}\ket{\Psi_{G}}\ket{1}_{FLAG},
\end{equation}
where 
\begin{equation}
\ket{\Psi_{G}}=\sum_{k=0}^{N-1}w_k \ket{k}_{SYS}\bigotimes_{j=0}^{N-1}[\ket{0}_{I_j}
\ket{0}_{C_j}]\bigotimes_{r=0}^{L-1}\ket{0}_{CTRL}.
\end{equation}
This superposition tells us that whenever we measure 1 on the FLAG register, which happens with probability $\rho$, we have prepared the correct amplitude encoding. Conversely, whenever we measure the value 0, we have to repeat the procedure from the beginning. This behavior has already been described in \cite{
PhysRevResearch.4.013091}. In this work we boost the probability of measuring the correct state by means of amplitude amplification \cite{Brassard_2002}, see supplementary material~\ref{AppendixD}.   \subsection{Theoretical scaling of runtime and memory cost \label{section5}} 
Even though the final desired amplitude encoding state is stored on the register SYS, the total amount of ancillas to perform the algorithm in parallel scales as $\mathcal{O}(Mn)\leq \mathcal{O}(Sn)\leq \mathcal{O}(Nn)$.

The main contribution in this sense is given by the M Index registers, each of size n. This number is upper bounded by the amount of non zero entries S in the vector after the $2^{-L}$ precision approximation. 

In supplementary material~\ref{AppendixA} we show how the memory cost can be reduced significantly in case of redundant or sparse input vectors.
It is important to note that several other approaches in the literature can exploit this constraint on the structure of the input, for example in \cite{UniformQuantumStatesBoolean} they analyse the decision diagram associated to the target state, which simplifies significantly when we impose certain constraints on the input.

\label{timeCost}
The encoding block of the algorithm has depth
\begin{equation}
\label{full_scaling_N_M}
    \mathcal{O}(\frac{N}{M}\log{(M+1)})
\end{equation}
and must be applied  $\mathcal{O}(\frac{1}{\sqrt \rho})$ times during the amplitude amplification step. This means that the total depth of the circuit in the most parallel case ($M=N$) and hence the total run time $\tau$ follows 
\begin{equation}
\tau(\vc v)\propto n\frac{1}{\sqrt{ \rho(\vc v)}}=\sqrt{N \|\vc v\|_\infty^2}n,
\end{equation}
where we used the formula for $\rho$ in Eq.~(\ref{eq:definitionofrho}) in the case of a unit vector ($\| \vc v \|_2=1$)
\begin{equation}
\rho(\vc{v})=\frac{1}{N}\sum_{i=0}^{N-1}{\left(\frac{v_i}{\|\vc v\|_\infty}\right)^2}=\frac{1}{N \|\vc v\|_\infty^2}.
\end{equation}
In supplementary material~\ref{AppendixA} we describe in more detail the properties of the squared max norm over $\mathbf{S}_{N-1}$ and we show that, since $\frac{1}{N}\leq \|v\|_\infty^2\leq 1$, in the worst case scenario the runtime scales as
\begin{equation}
\tau _\text{worst}\sim \sqrt{N}n.
\end{equation}
Furthermore, we also prove that for a randomly chosen vector that is uniformly sampled over the N-sphere $\mathbf S_{N-1}$ we have the following behavior for the average and the variance of $\tau$

\begin{align}
\mathbf{E}[\tau(\vc v)\mid \vc v \sim \text{uniform}(\mathbf{S}_{N-1})]\overset{N \gg 1}{=}& \mathcal{O}\left( n^{1.5} \right)\\
\text{and }\text{Var}[\tau(\vc v)\mid \vc v \sim \text{uniform}(\mathbf{S}_{N-1})]\overset{N \gg 1}{=}& \mathcal{O}\left( \frac{n^{1.5}}{\sqrt{N}} \right),
\end{align}
respectively.
In practice, we expect this scaling to hold for large images, for example those in \cite{shermeyer2020spacenet}, and we give an example of this behaviour in Section~\ref{case_study}.
\subsection{Generalization to complex numbers} \label{generalizationToCompl}
In order to use the output of this encoding procedure as initial state for many quantum algorithms, for instance the QFT, it is necessary to extend this framework to complex numbers.

If we consider a complex vector $\vc{v}=(R_1 e^{i\phi_1}, .. R_N e^{i\phi_N})$, we can encode the modulus and the phase of each entry separately, which means using different registers, but simultaneously.

Again we require that $\|\vc{v}\|= \sqrt{\sum_{i=0}^{N-1}{R_i^2}}=1$
The modulus can be encoded using the same circuit described above, while the phase of each entry could be encoded by repeatedly applying controlled phase gates to FLAG registers.
We use $2N$ registers of n qubits to encode respectively the modulus and the phase angle of each entry.
The pre-processing is analogous to the real case.

Given a complex vector $\vc{v}=(R_0 e^{i\phi_0}, .., R_{N-1} e^{i\phi_{N-1}})$ we define 
the modulus vector 
\begin{equation}
    \vc{R}=(R_0,R_1,\dots,R_{N-1}),
\end{equation}
normalized with $\sum_{j=0}^{N-1}R_j^2=1$,
and the phase vector 
\begin{equation}
    \vc{\phi}=(\phi_0,\phi_1,\dots,\phi_{N-1}),
\end{equation}
where we fix the first phase to $\phi_0=0$ using the invariance under global phase shift of the quantum state.
From the modulus vector we go to the angles vector defined as

\begin{equation}
\theta^{(R)}_i:=\frac{2}{\pi}\arcsin(\frac{R_i}{\|R\|_\infty}),\qquad \theta^{(R)}_i \in [0,1]
\end{equation}
while for the phase we simply define 
\begin{equation}
\theta^{(\phi)}_i=\frac{\phi_i}{\pi},\qquad \theta^{(\phi)}_i \in [-1,1]
\end{equation}
If we encode $\vc{\theta}^{(R)}$ and $\vc{\theta}^{(\phi)}$ into two separate binary matrices $B^{R},B^{\phi} \in \mathbf{M}_{N \times L}(\mathbb{F}_2)$, we can repeat the same procedure described above to encode the modulus register, using a series of controlled $R_y$ rotations, whereas the phase is encoded using $R_s$ gates with decreasing argument, 
\begin{equation}
   R_s= \begin{pmatrix}
1 & 0 \\
0 & e^{\frac{i2\pi}{2^s}}
\end{pmatrix}
\end{equation}
No sign bit is used in the binary encoding of $\vc{\theta^{(R)}}$ as $R_i \geq 0, \forall i \in \{0,N-1\}.$

\subsection{Case Study - Analysis of the data density \texorpdfstring{$\rho$}{rho} over satellite images}\label{case_study}
In this section we present as our case study the analysis of the behaviour of our data density parameter $\rho$ over images from the data base Copernicus Data Space Ecosystem \cite{sentiwiki2025}. In particular, we divide the image into a grid of sectors and we calculate the density of the flattened and normalized vector associated to each sector, see Fig~\ref{heatmaps} and Fig~\ref{sectors}. We repeat this process varying the amount of sectors in the grid and therefore their size and then plot how the average value of the sector density changes.

The analysis aims to showcase a real dataset that could be suitable for the proposed fast encoding approach. This approach could, for instance, enable large images to be loaded into a quantum state and their quantum Fourier transform to be performed in sub-linear time relative to the input size \cite{6129631,application_of_fft_sar,Camps_2020}.
We emphasize that we directly calculate the density of the image data and, in contrast to Section~\ref{section5}, we do not assume a distribution for the input data.  

\subsubsection{Scaling of $\rho$ over a satellite image of Germany}
In this case study we consider a satellite image of Germany shown in Fig.~\ref{initial_image} taken from \cite{CopernicusSentinel1}. 
\begin{figure*}[tpbh]
    \centering
    \subfloat[This heat map shows the local density $\rho$ for a grid with 20 x 20 sectors considering the gray scale version of the merged image. At the center of the blue circle we indicate in green the sample sector displayed in Fig~\ref{sector_20}.]{
        \centering
        \includegraphics[width=0.32\linewidth]{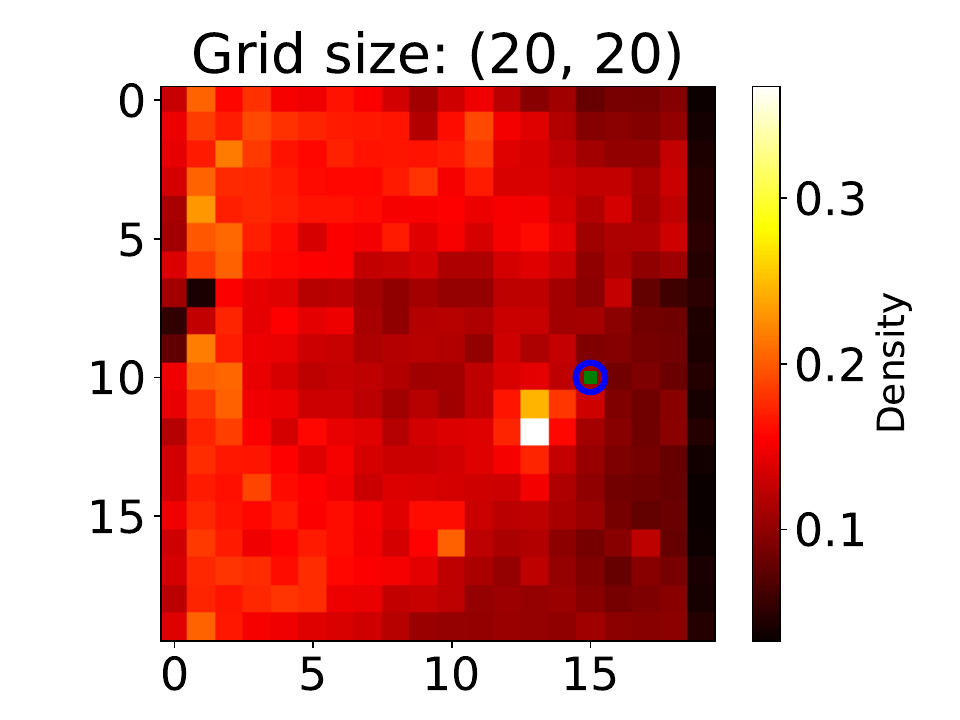}
        \label{heatmaps_20}}
    \hfill
    \subfloat[Heat map of the local density for a grid with 35 x 35 sectors.  At the center of the blue circle we indicate in green the sample sector displayed in Fig~\ref{sector_35}.]{
        \centering
        \includegraphics[width=0.32\linewidth]{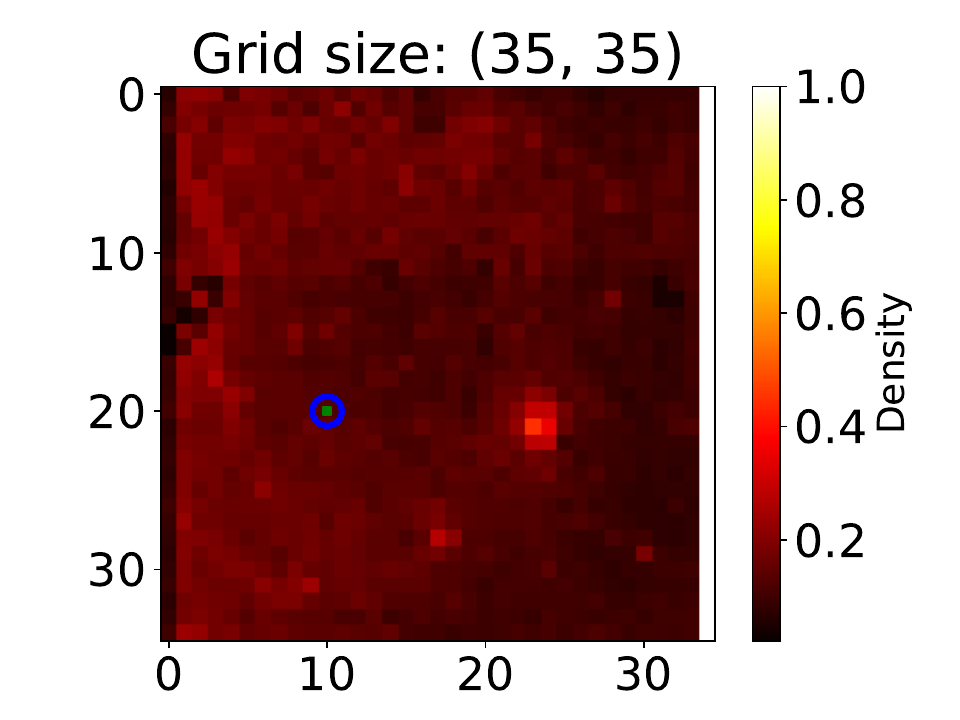}
        \label{heatmaps_35}
    }\hfill
    \subfloat[Heat map of the local density for a grid with 100 x 100 sectors.  At the center of the blue circle we indicate in green the sample sector displayed in Fig~\ref{sector_100}.]{
        \centering
        \includegraphics[width=0.32\linewidth]{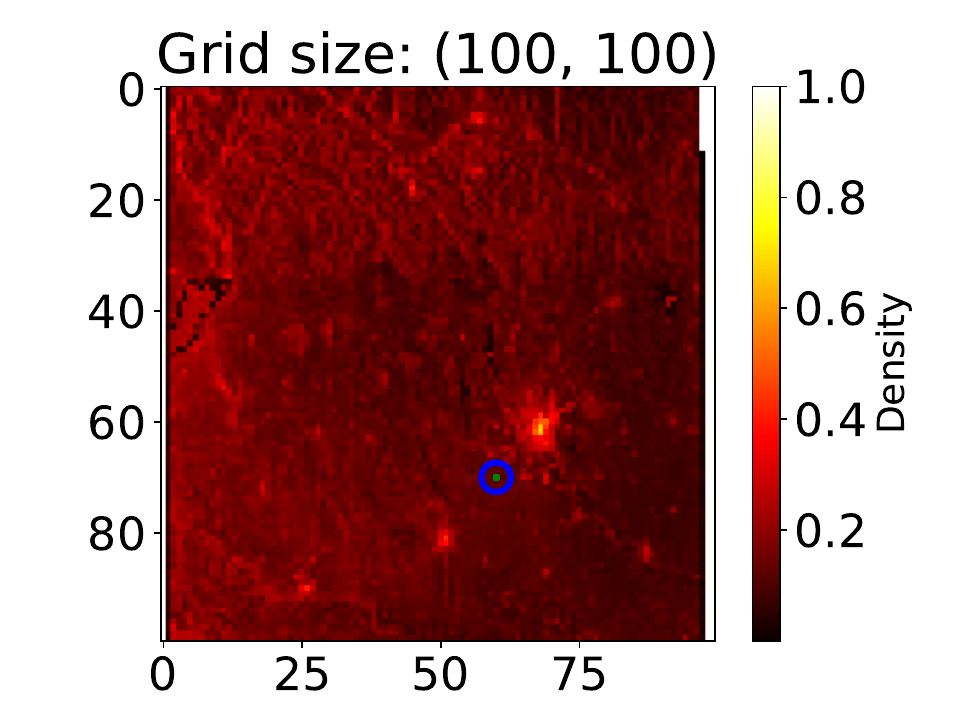}
        \label{heatmaps_100}
    }
    \caption{Heat maps displaying local density for different grid sizes. We notice how the average density increases as the sector size decreases.}
    \label{heatmaps}
\end{figure*}
\begin{figure*}[tpbh]
    \centering
    \subfloat[Sector 15,10 of size (833,1318) from the 20x20 grid in Fig~\ref{heatmaps_20}.]{
        \centering
        \includegraphics[width=0.32\linewidth]{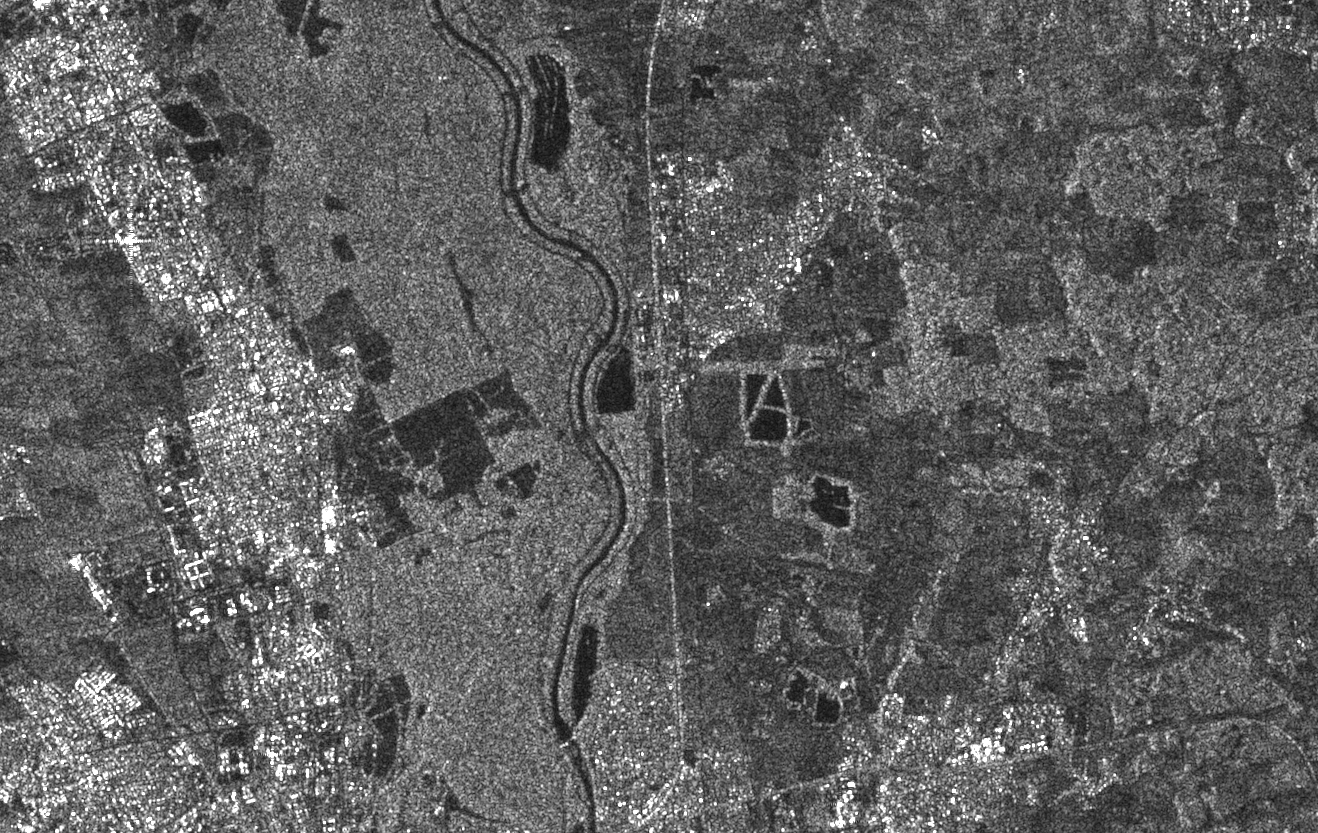}
        \label{sector_20}
    }\hfill
    \subfloat[Sector 10,20 of size (476,753) from the 35x35 grid in Fig~\ref{heatmaps_35}]{
        \centering
        \includegraphics[width=0.32\linewidth]{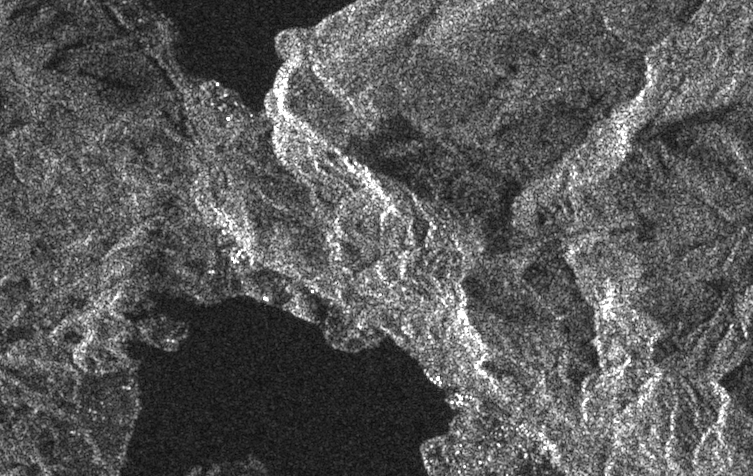}
        \label{sector_35}
    }\hfill
    \subfloat[Sector 60,70 of size (166,273) from the 100x100 grid in Fig~\ref{heatmaps_100}.]{
        \centering
        \includegraphics[width=0.32\linewidth]{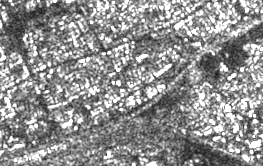}
        \label{sector_100}
    }
    \caption{Examples of different sectors with varying sizes from gray scale version of Fig~\ref{initial_image}, each corresponding to a specific grid size.}
    \label{sectors}
\end{figure*}

Our analysis consists of dividing the image into a grid of an increasing amount $n_s^2$ of sectors, whose size hence decreases as $\frac{n_{\text{pixels}}}{n_s^2} $. We then normalize each sector independently and compute its local $\rho$.
Fig.~\ref{heatmaps} and Fig.~\ref{sectors} respectively show a heat map of the local density for different grid sizes and a sample image taken from the correspondent grid.

We analyze the scaling of the average vector density $\rho$ over the sectors as their size varies and we get the plot showed in Fig.~\ref{density_scaling} for the gray scale version of the image in Fig.~\ref{initial_image}. From the plot we see how the density decreases as we increase the size of the sectors compared to a $\propto \frac{1}{\sqrt{x}} $ and a $\propto \frac{1}{\log{x}} $ scaling, the latter being the average behavior we would expect from a uniformly sampled set of vectors over $\mathbf{S}_{M-1}$, where $M$ is the size of the flattened normalized versions of the 2D arrays of the sectors. As we explain in the supplementary material \ref{averageRuntime}, sampling the image vector uniformly from the hypersphere corresponds to sampling its components independently according to a standard gaussian distribution ($\mu=0, \sigma=1$) and then normalizing the vector to 1.
\begin{figure}
    \centering
    \includegraphics[width=\linewidth]{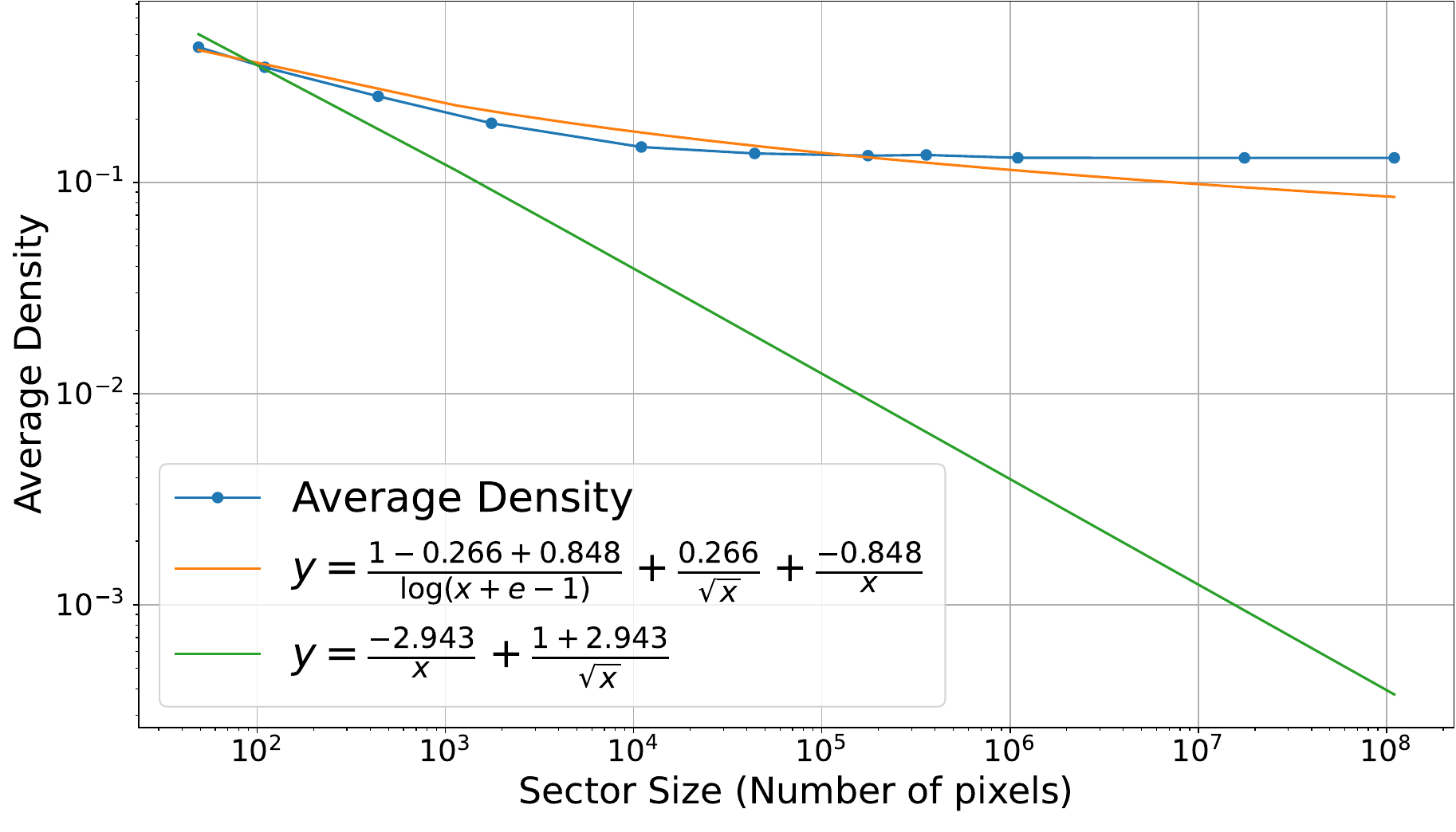}
    \caption{Behavior of the average data density as a function of the size of the flattened and normalized versions of the 2D arrays of the sectors compared to the one expected when sampling from a uniform distribution over the N-sphere and to $\propto \frac{1}{\sqrt{x}}$.}
    \label{density_scaling}
\end{figure}
\subsection{Example - Application to QFT \label{QFT_example}}
One of the possible applications where this approach helps preserving a relevant speed up with respect to a classical algorithm is quantum Fourier transform (QFT).

While the classical Cooley-Tukey algorithm \cite{Cooley1965AnAF} for computing FFT has in general a scaling of $\mathcal{O}(Nn)$, which could potentially get closer to a linear one by parallelizing on multiple processors \cite{scalabilityFFT,Ayala2021} or even sub-linear by performing sparse FFT (under specific assumptions)\cite{Ermeydan2018}, the QFT has a scaling of $\mathcal{O}(n^2)$, see Fig.~\ref{qft_circuit}, which can be pushed to $\mathcal{O}(n \log n)$ if we have controlled rotations as native gates~\cite{QFFT}.

It follows that, if we are able to encode a classical vector of size N with a scaling that is better than linear, we can at least in theory partially preserve the quantum advantage of QFT over FFT on classically stored data.

This result seems particularly useful for quantum machine learning and image processing where large vectors are read, processed and stored multiple times \cite{AnalysisFFTImages,FFTDeepLearning}.

Fourier transform and its variants have several applications related to image processing, for example in \cite{FFT_for_SAR}  the authors describe a SAR method that uses unequally-spaced fast Fourier transforms (USFFTs), while in \cite{SPECAN} momentary Fourier transform (MFT) is used in the Spectral Analysis Algorithm (SPECAN).
\begin{figure}[tbph]
\center
    \includegraphics[width=\linewidth]{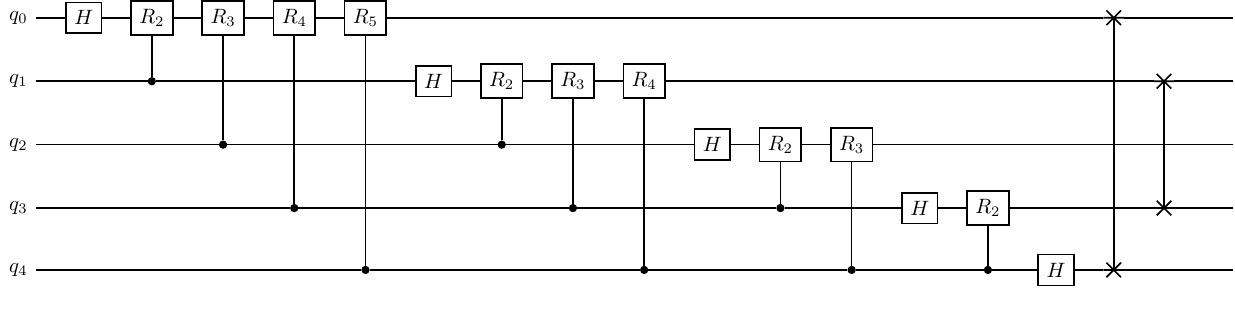}
    \caption{Circuit performing the QFT on 5 qubits.}
    \label{qft_circuit}
\end{figure}
As we mentioned before, if we consider the circuit that encodes and transforms the classical input with controlled rotations as native gates, the $\mathcal{O}(n\log n)$ contribution of the fast QFT to the total runtime can be neglected. This circuit maintains a worst case runtime
\begin{equation}
\mathbf{\tau_\text{QFT}}_\text{Worst}\propto \sqrt{N}n 
\end{equation}
and an average scaling over the uniform distribution on the n-sphere 
\begin{equation}
\label{average_runtime}
\mathbf{E}[\mathbf{\tau}_\text{QFT}]\propto n^{1.5}.
\end{equation}
In order to extract the results of the QFT we need to run and measure the circuit multiple times, in series or in parallel. If we are interested in retrieving every entry of the transformed vector with some fixed precision, this requires $\mathcal{O}(N)$ unentangled measurements \cite{Verdeil_2023}.
This exponential scaling in the amount of qubits comes directly from the complexity of state tomography and the output problem of quantum algorithms. It is important to notice that 
\begin{enumerate}

    \item this large amount of measurements is required only whenever we want to read the full output pure state and store it classically. Some interesting properties of the output, for example the expectation value of an observable on it, can be computed much more efficiently, as in \cite{quantumImageFilteringCui},
    \item the Fourier transformed output can be used as it is as input for another quantum circuit, without performing any measurement and therefore with the average runtime in Eq.~(\ref{average_runtime}).
\end{enumerate}

\section{Discussion}\label{Discussion}

Although the encoding procedure requires an amount of ancilla qubits given by $\mathcal{O}(Mn)$ with $1\leq M \leq S \leq N$, where $M$ is the parallelization parameter and $S$ is the number of nonzero entries in the input vector, the SYS quantum register that stores the final superposition has exponentially smaller size than the original amount of bits in its classical representation. Considering this memory cost together with the runtime scaling in Eq.~\ref{full_scaling_N_M} as a function of $N$ and $M$ , we observe that in order to fast encode generic and possibly unstructured input data with a  shallow circuit, we need to choose the largest possible value for M, at the cost of potentially requiring a superlinear amount of ancilla qubits in the input size. However, varying the parameter $M$ allows us to trade runtime and memory cost, adjusting it to the resources at our disposal.\\
After its preparation, the encoded state can be used as input to other quantum circuits or to their subroutines, such as, for example, the quantum Fourier transform or the quantum k nearest neighbors algorithm,  \cite{Zardini_2024}.

Moreover, in our state preparation algorithm the encoding circuit has a weak dependence on the input vector, allowing the classical pre-processing to be parallelized over the $S$ entries. This enables on-the-fly, sublinear quantum encoding and processing of large vectors. This feature makes this encoding scheme suitable for a cloud service that encodes and processes large input images, for example by applying the QFT and filters as explained in \cite{quantumImageFilteringCui}, and subsequently sends the smaller  encoding state to another device via the quantum internet. This way the sublinear scaling can be preserved even in the worst case scenario $\sqrt{N} n$ and a significantly better average scaling $n^{1.5}$ over a uniformly distributed data set on $\mathcal{S}_{N-1}$ (or, equivalently, over a data set obtained by sampling the entries according to a standard normal distribution and normalizing to one) is achievable.

\section{Methods}
\label{methods}
In this section, we provide an overview of the key components of the encoding circuit, along with a concise explanation of how we utilized amplitude amplification to increase the likelihood of preparing the correct state.
\subsection{Components of the encoding circuit}
The initial step shown in Fig.~\ref{initial_step_fig} involves two logical operations: the preparation of a uniform superposition of N states on the SYS by means of the Hadamard gate $H^{\otimes n}$ and the XOR gate between the SYS and the index registers.
\begin{figure}[tpbh]
    \centering
    \input{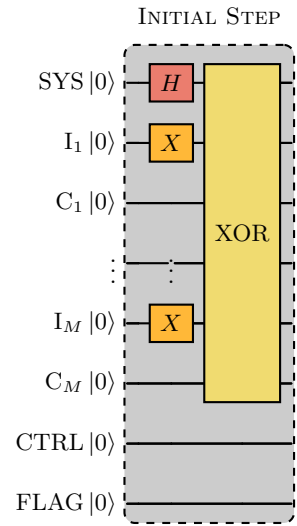}
    \caption{Initial step of the encoding circuit}
    \label{initial_step_fig}
\end{figure}

\subsubsection{\textbf{The XOR (Fan-out) operation}\label{Fanout}}
The Fan-out operation, described in many QRAM architectures \cite{phalak2023quantum,gokhale2020quantum}, consists of several C-NOTs which share the same control and have different targets, in our case, respectively the SYS and the Index registers, see Fig~\ref{XOR}. 
The action of this gate over the registers is described by 
\begin{equation}
\begin{aligned}
&\text{XOR}[\sum_{i=0}^{N-1}\ket{j}_\text{SYS} \otimes_{i=0}^{N-1}\ket{i+1}_{I_i}]\\
=&\sum_{i=0}^{N-1}\ket{j}_\text{SYS} \otimes_{i=0}^{N-1}\ket{i+j+1}_{I_i}.
\label{Xor_gate}
\end{aligned}
\end{equation}
This operation is well known in the literature and can be efficiently implemented in parallel using GMS gates \cite{Maslov_2018,gokhale2020quantum,van_de_Wetering_2021}.
The GMS gate is described by 
\begin{equation}
\text{GMS}(\chi_{1,2},\chi_{2,3},\chi_{n,n+1})=\exp (-\mathrm{i}\sum_{i=0}^n\sum_{j=i+1}^n X_i X_j \frac{\chi_{i,j}}{2}).
\end{equation}
However, we can provide an alternative parallelization scheme that involves $\mathcal{O}(M)$ ancilla qubits and allows to reduce the circuit depth of this operation to $\mathcal{O}(\log(n))$. This scheme is essentially the same for both Fan-in and Fan-out and is showed in Fig~\ref{Mem2CTRL}.
\begin{figure}[tpbh]
    \centering
    \input{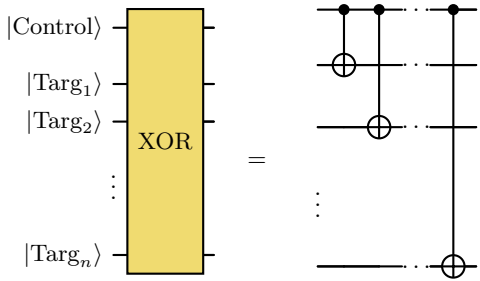}
    \caption{Representation of the XOR operation as a Fan-out.}
    \label{XOR}
\end{figure}

The second section of the circuit, where the actual encoding occurs, is composed of several sub-circuits. It begins with the LoadIndex gate that acts on the Index registers and turns it into the computational basis state associated to the its binary address in the memory. It is followed by the the AND operation between each of the Index registers and the correspondent Parity compression ones, which for each state in the superposition associated with a value $i$ on the system flips the $i$-th parity compression register from the $\ket{0}$ to the $\ket{1}$ state. Then we have the Mem2CTRL gate (Fan-in) between the the Parity compression registers and the CTRL register and finally the controlled rotations that target the FLAG register, which as we will see in more detail in the rest of this section, perform the conversion of a uniform superposition of computational basis states to a weighted superposition, whose weight are the entries of $\vc w$.
\subsubsection{\textbf{The load index operation}}
After performing the XOR operation, we act on each Index register with a tensor product of X and I operators, where the positions of the Xs correspond to those of the ones in the binary representation of the relative index:
\begin{equation}
\text{LoadI}_k:=\bigotimes_{i=0}^n X^{b(k)_i},\qquad k \in \{0,1,\dots N-1\},
\end{equation}
where $b(k)$ is the n-bit binary string representing the integer $k$.

\subsubsection{\textbf{AND (n-Toffoli)}}
The next component of the encoding step is the logical AND operation which is implemented through a n-Toffoli gate, see Fig~\ref{AND}.
Its function is to check if the binary sum of the index and the system registers (which are both added to the parity check register) is zero, leaving the parity check register in the original state $\ket{1^n}$ and therefore indicating a perfect match between the two values.
\begin{figure}[tbph]%
    \centering%
    \input{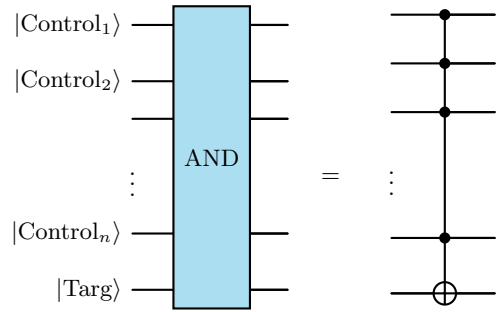}%
    \caption{The representation of the logical AND operation as a  n-Toffoli gate. }%
    \label{AND}%
\end{figure}%
Although in \cite{PhysRevA.101.022308} the authors describe an efficient way to perform an n-Toffoli gate in a single step with high fidelity, at least when n is not too large, in \cite{PhysRevResearch.4.013091} the author provides the recursive decomposition of this kind of gate in terms of $n-1$ ordinary Toffoli gates and using $n-1$ ancilla qubits. 
The interesting thing about this decomposition is that most of the operations involved can be performed in parallel, so that the actual circuit depth of the n-Toffoli gate is $\mathcal{O}(\log n)$.

\subsubsection{\textbf{MTC} as \textbf{Memory2Control} followed by \textbf{CTRL Compression}}
The purpose of the MTC sub-circuit is to load the correct row of the binary matrix B into the CTRL register. This must be done in superposition, so that in the following step this register can be used to control the $R_y$ rotations targeting the FLAG register.
The sub-ciruit is therefore input-dependent, in the sense that the classical information about the entries (in particular of the ones) of the matrix $B$ is used to prepare (or control) a series of C-NOTs that have each a different parity compression register as control and share the CTRL register as their common target. Here the parity compression registers $C_i$ act as a switch that, in superposition, decides which classical bit string to load on CTRL.
\begin{figure}[tbph]%
\centering%
    \input{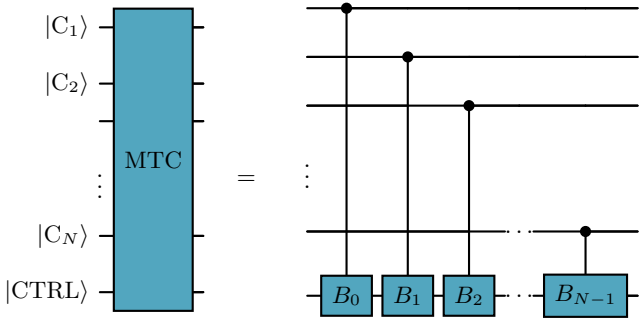}%
    \caption{Representation of the MTC operation as a Fan-in.}%
    \label{MemToCTRL_figure}%
\end{figure}%
In Fig.~\ref{MemToCTRL_figure} the operation is shown as serialized, but we can use the same techniques applied to the fan-out, i.e. GMS gates or the ancilla decomposition, in order to parallelize it and make it $\mathcal{O}(n)$ in depth while still maintaining the $\mathcal{O}(Nn)$ memory scaling.
If we choose the second option, this can be achieved by targeting multiple CTRL registers with pairs of C-NOT operations, as shown in Fig~\ref{Mem2CTRL}. The content of these registers is then compressed into a single CTRL register using the sub-circuit described in Fig.~\ref{CTRL_comp}. 
\begin{figure*}[tpb]%
    \centering%
    \subfloat[First part of the parallelization scheme where pairs of classically controlled C-$B_i$ operations act on $\frac{ N }{2}$ ancilla copies of the CTRL register, each requiring L qubits. Each $B_i$ is a tensor product of $X$ gates encoding the ones in the row $\vc B_i$.  This decomposition parallelizes the Fan-Out operation by means of ancilla qubits.]{
        \centering%
         \input{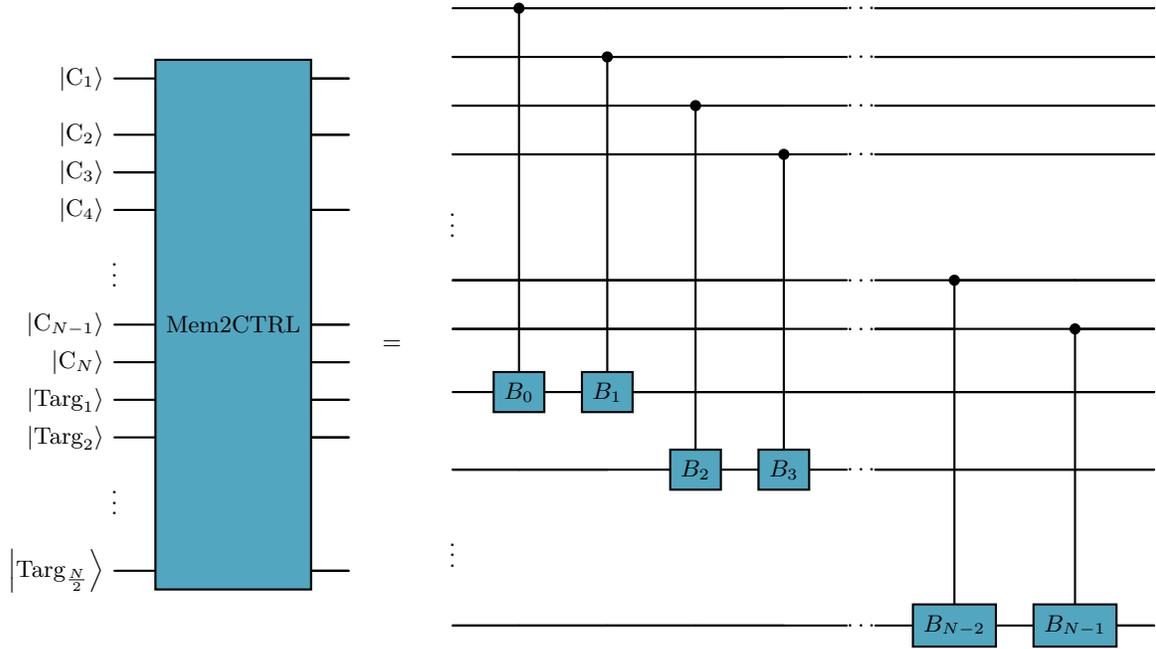}%
        \label{Mem2CTRL}%
    }%
    \hfill%
    \subfloat[Second part of the parallelization scheme, called CTRL compression, which represents a possible decomposition for the Fan-in. Here, layers of C-NOTs compress the information about the CTRL register using a recursive division approach where each CTRL register has L qubits. The depth of the whole sub-circuit is $\mathcal{O}(n)$, and the memory scaling remains $\mathcal{O}(Mn) \leq \mathcal{O}(Nn)$.]{
        \centering%
        \input{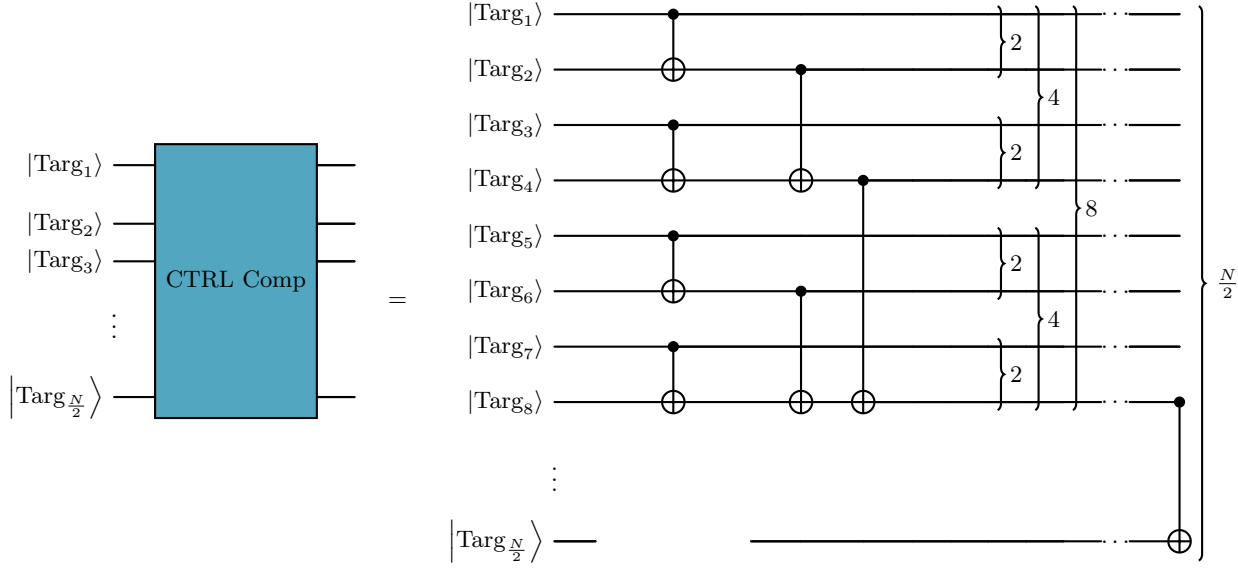}%
        \label{CTRL_comp}%
    }
    \caption{Illustration of the parallelization scheme for the Fan-In.}%
    \label{ctrl_comp}%
\end{figure*}%

\subsubsection{\textbf{Encoding using controlled rotations}}
The binary matrix $B$ associated to the input vector $\vc{v}$ has $N$ rows, one for each entry in $\vc{v}$, and $L$ columns, which correspond to the $L$ bits in the binary representation of the entries of the $\vc{\theta}$ angle vector. 

A key aspect of this encoding is that we can reconstruct each real entry $v_i$ by using $L$ controlled rotations with decreasing angle, see FIG~\ref{controlled_rotations}.
These rotation are either performed or skipped according to the $L$ bits in the correspondent row $\mathbf{B}_i, i \in \{0 \dots N-1\}$ of $B$.
Going back to the example in Eq.~(\ref{example_vec}), the first row is 
\begin{equation}
    \vc{B}_0=(0,0,1,0,1,1)
\end{equation}
if we apply L classically controlled $R_y$ rotations with angles 
\begin{align}
    \phi_0 &= 2\pi\nonumber\\
    \phi_1 &= \pi 2^{-1}\nonumber\\
    \phi_2 &= \pi 2^{-2}\nonumber\\
    &\vdots \nonumber\\
    \phi_{L-1} &= \pi 2^{-(L-1)}
\end{align}
to an ancilla qubit whose initial state is $\ket{0}$ where we use the bits in this row as control we obtain 
\begin{align}
\ket{\Psi_0}=&\prod_{l=0}^{L-1} R_y(\phi_l)^{B_{0,l}}\ket{0}\nonumber\\
=&\hphantom{+{}}\cos(\frac{1}{2}\sum_{l=0}^{L-1}B_{0,l}\cdot \phi_l)\ket{0}+\nonumber\\
&+\sin(\frac{1}{2}\sum_{l=0}^{L-1}B_{0,l}\cdot \phi_l)\ket{1}.
\end{align}
Exploiting the definition of $\phi_i$ and  $\mathbf{B}$ we get 
\begin{align}
\ket{\Psi_0}
=&\hphantom{+{}}\cos(\frac{1}{2}\sum_{l=0}^{L-1}B_{0,l}\cdot \phi_l)\ket{0}+\nonumber\\
&+\sin(\frac{1}{2}\sum_{l=0}^{L-1}B_{0,l}\cdot \phi_l)\ket{1}\nonumber\\
=&\hphantom{+{}}\cos(\frac{1}{2}\sum_{l=1}^{L-1}B_{0,l}\cdot\frac{\pi}{2^l}+ \pi\cdot B_{0,0})\ket{0}+\nonumber\\
&+\sin(\frac{1}{2}\sum_{l=1}^{L-1}B_{0,l}\cdot\frac{\pi}{2^l}+ \pi\cdot B_{0,0})\ket{1}\nonumber\\
=&\hphantom{{}+{}}(-1)^{B_{0,0}}\cos(\arcsin(|\frac{v_0}{v_\infty}|))\ket{0}+\nonumber\\
&+(-1)^{B_{0,0}}\sin(\arcsin(|\frac{v_0}{v_\infty}|))\ket{1}
\end{align}
and finally 
\begin{equation}
\begin{aligned}
\ket{\Psi_0}=&\mathrm{sign}(v_0)\sqrt{1-\left(\frac{v_0}{v_\infty}\right)^2}\ket{0}+\text{sign}(v_0)\left|\frac{v_0}{v_\infty}\right|\ket{1}\\
=&\text{sign}(v_0)\sqrt{1-\left(\frac{v_0}{v_\infty}\right)^2}\ket{0}+\frac{v_0}{v_\infty}\ket{1}.
\end{aligned}
\end{equation}

\begin{figure*}[tpbh]
    \centering
    \input{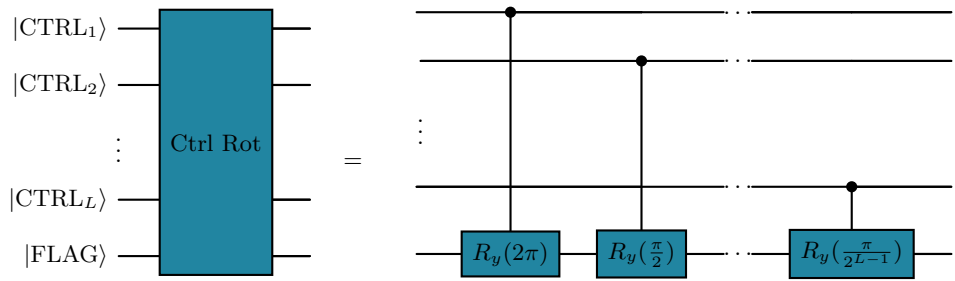}
    \caption{The controlled $R_y(\phi_i), i \in \{0,\dots,L-1\}$ rotations that turn the uniform superposition of computational basis states into a weighted one. }
    \label{controlled_rotations}
\end{figure*}
As described in Section~\ref{state_evolution}, these controlled rotations have the effect of turning the uniform superposition of computational basis states in Eq.~\ref{uniform} into the weighted superposition in Eq.~\ref{weighted_superposition}.

\subsection{The Amplitude Amplification Algorithm }\label{ampAmplification}
Looking at the decomposition in Eq.~\ref{weighted_superposition}, it comes natural to apply the amplitude amplification procedure, which is a generalization of the famous Grover's algorithm, to amplify the probability of measuring the desired state \cite{Brassard_2002,grover1996fast}.  
 We repeatedly apply the Grover operator
 \begin{equation}
\mathcal{Q}=-\mathcal{E}\mathcal{S}_0\mathcal{E}^\dagger\mathcal{S}
\end{equation}
to the FLAG and SYS registers, where $\mathcal{E}$ is the full encoding circuit described above seen as a gate while $\mathcal{S}$ and $\mathcal{S}_0$ are respectively the oracle and the reflection around the zero state.
More details about the procedure can be found in supplementary material~\ref{AppendixD}.

The original encoding algorithm prepares the right superposition in the final state only whenever we measure the value $1$ on the FLAG qubit. This happens with probability $\rho$. In case we measure $0$ instead we have to repeat the procedure from scratch. It follows that the average amount of runs we need for a successful encoding scales as $\mathcal{O}(\frac{1}{\rho})$. 
Using the amplitude amplification algorithm we boost the probability of measuring the correct value on the FLAG register, so that it reaches a value of at least $\max(1-\rho, \rho)$.
As usual when applying Grover's algorithm, this procedure gives us a quadratic improvement over the runtime, that now scales as $\mathcal{O}(\frac{1}{\sqrt\rho})$.

\section{Acknowledgements}
VP acknowledges helpful discussions with Andreas Buchheit, Peter Schuhmacher, and Christian Wimmer.
This study was funded by the QuantERA grant EQUIP via DFG project 491784278 and by the
Federal Ministry for Economics and Climate Action (BMWK) via project ALQU and the Quantum Fellowship Program of DLR. The funder played no role in study design, data collection, analysis and interpretation of data, or the writing of this manuscript.

\section{Data availability}
Data is available from the corresponding author upon reasonable request.
\section{\textbf{CODE AVAILABILITY}}
The codes used to generate data for this paper are available from the corresponding
author upon reasonable request.

\section{Author contributions}
ME and MF secured the funding for this project. VP came up with the project idea. VP designed the algorithm and proved its runtime with support of ME. VP carried out the numerical case study with support of SH. All authors contributed to the analysis and interpretation of the results and the writing of the manuscript.

\section{Competing interests}
All authors declare no financial or non-financial competing interests.

\FloatBarrier
%

\clearpage
\begin{center}
\textbf{\large Supplementary Material: Fast Quantum Amplitude Encoding of Typical Classical Data}
\end{center}
\setcounter{section}{0}
\setcounter{equation}{0}
\setcounter{figure}{0}
\setcounter{table}{0}
\setcounter{page}{1}
\makeatletter
\renewcommand{\thesection}{S\arabic{section}}
\renewcommand{\theequation}{S\arabic{equation}}
\renewcommand{\thefigure}{S\arabic{figure}}
\renewcommand{\bibnumfmt}[1]{[S#1]}
\renewcommand{\citenumfont}[1]{S#1}

\section{Behavior of the total runtime}\label{AppendixA}
In Section~\ref{timeCost} we showed that the total runtime for the encoding algorithm is 
\begin{equation}
\tau \sim n\frac{1}{\sqrt \rho}. \label{eq:tau}
\end{equation}
Since the vector $\vc{v}$ is normalized using the Euclidean norm as $\|\vc{v}\|_2=1$, we have
\begin{equation}
\rho(\vc{v})=\frac{1}{N}\frac{\|\vc{v}\|_2}{\|v\|_\infty}=\frac{1}{N \|\vc{v}\|_\infty^2}. \label{eq:rhoofv}
\end{equation}
Inserting Eq.~(\ref{eq:rhoofv}) into Eq.~(\ref{eq:tau}) yields
\begin{equation}
\tau \sim \sqrt{N \|\vc{v}\|_\infty^2} n.
\end{equation}

\subsection{Average runtime for random input vectors \label{averageRuntime}}
In this section we will calculate the average runtime when the input vectors are uniformly sampled at random from $\mathbf S_{N-1}$.

For $j\in\{1,2,...,N\}$ we define the $j$-th canonical basis vector with a sign $(-1)^s$, $s\in\{0,1\}$, whose $i$-th component is
\begin{equation}
    \vc{v}(j, s)_i:=(-1)^s \delta_{ij}. \label{eq:basiswithsign}
\end{equation}
We also define a uniform (in modulus) vector
\begin{equation}
 \vc v(\vc{s}) := \sum_j \vc{v}(j, s_j) \label{eq:uniformvector}
\end{equation}
for $\vc{s}\in\{0,1\}^N$.
Note that the vectors $\vc{v}(j, s)$ and $\vc{v}(\vc{s})$ are maximally sparse and maximally dense, respectively.
There are only $2N$ different $\vc{v}(j, s)$, but there are $2^N$ different $\vc{v}(\vc{s})$.

The squared max norm $\|\cdot \|_\infty^2$ restricted to the $N$-sphere is a continuous function to the (positive) interval $[\frac{1}{N}, 1]$
\begin{equation}
    \|\cdot \|_\infty^2:\mathbf{S}_{N-1}\to [\frac{1}{N}, 1].
\end{equation}
It reaches its maximum $1$ only on the $\vc{v}(j,s)$ and its minimum $\frac{1}{N}$ only on the $\vc{v}(\vc{s})$.
We immediately notice that there are many more vectors with value of $\|v\|_\infty^2$ equal to $\frac{1}{N}$ than to $1$.

This fact can be seen as the vectors with $\|v\|_\infty^2=N$ having a lower entropy (induced by the function) than the ones for which $\|v\|_\infty^2=1$.

The square of the infinity norm is invariant under a change of sign for any coordinate and also under any permutation of them.
Thus the $(N-1)$-sphere can be partitioned into $2^N\cdot N!$ regions associated with different choices $\vc{s}$ for the signs and permutations $\sigma$ of the components.
Each of these regions has the same area
\begin{equation}
D(N-1):=\frac{2 \pi^{\frac{N}{2}}}{\Gamma(\frac{N}{2})2^N N!}.
\end{equation}
We now compute the average vale of $\|v\|_\infty^2$ over a uniform distribution on $S_{N-1}$, which we denote by $\langle\|v\|_\infty^2 \rangle_{S_{N-1}}$. 

A standard way to generate uniformly random vectors on the $(N-1)$-sphere is to generate a vector $\vc{x}$ with components $x_i$ sampled from independent Gaussian distributions and then normalize it.
The resulting distribution is uniform on the $(N-1)$-sphere, because the probability density for $\vc{x}$, $\frac{1}{(2\pi)^{N/2}} \mathrm{e}^{-\|\vc{x}\|_2^2}$, is rotationally invariant.
This connection between the normal distribution $\mathcal{N}_n(0,1)$ and the uniform distribution over the $(N-1)$-sphere $S_{N-1}$ implies that the average value of $\|v\|_\infty^2$ can be computed via
\begin{equation}
\langle\|v\|_\infty^2\rangle_{S_{N-1}}=\left\langle \frac{\| v\|_{\infty}^2}{\|v\|_2^2}\right\rangle_{\mathcal{N}_N(0,1)}.
\end{equation}
The expectation value of a ratio of two random variables can be obtained via a Taylor series around the mean values~\cite{stuart1998kendalls}. Up to second order in the distance from the mean it reads
\begin{equation}
\begin{aligned}
\left\langle\frac{\|x\|_{\infty}^2}{\|x\|_2^2}\right\rangle
\approx&\hphantom{+{}} \frac{\langle\| x\|_{\infty}^2\rangle}{\langle\|x\|_2^2\rangle}-\frac{\mathrm{Cov}(\| x\|_{\infty}^2,\| x\|_{2}^2)}{\langle\|x\|_2^2\rangle^2}+\\
&+\frac{\langle\| x\|_{\infty}^2\rangle \mathrm{Var}(\| x\|_{2}^2)}{\langle\|x\|_2^2\rangle^3}
\end{aligned}\label{expectedratio}
\end{equation}
and the first order expansion of the variance gives
\begin{align}
\mathrm{Var}\left(\frac{\|x\|_{\infty}^2}{\|x\|_2^2}\right)\approx&\hphantom{+{}}
\frac{\mathrm{Var}\left(\|x\|_{\infty}^2\right)}{\langle \|x\|_2^2 \rangle^2}\nonumber\\
&- 2 \frac{\langle \|x\|_\infty^2 \rangle}{\langle \|x\|_2^2 \rangle^3} \mathrm{Cov}(\| x\|_{\infty}^2,\| x\|_{2}^2)\nonumber\\
&+ \frac{\langle \|x\|_\infty^2 \rangle^2}{\langle \|x\|_2^2 \rangle^4} \mathrm{Var}\left(\|x\|_{2}^2\right).\label{varianceofratio}
\end{align}
We now continue to calculate all the quantities required to evaluate Eqs.~(\ref{expectedratio}) and (\ref{varianceofratio}).
The expectation value for the max norm is
\begin{equation}
    \langle \|x\|_\infty \rangle_{\mathcal{N}(0,1)} 
    = \left(\sqrt{2 \ln (N)} +\mathcal{O}(\frac{1}{\sqrt{\ln(N)}})\right),
\end{equation}
see Eq.~(28.6.13) in \cite{cramer1946mathematical}, where we will need the cases $k=2$ and $k=4$.
For the Euclidean norm the expectation value reads
\begin{equation}
    \langle\| x\|_{2}^k\rangle_{\mathcal{N}(0,1)} = 2^{\frac{k}{2}} \frac{\Gamma(\frac{N+k}{2})}{\Gamma(\frac{N}{2})},
\end{equation}
which is the $k$-th moment of the Chi distribution~\cite{abell1999statistics}.
Again we are interested in the special cases
\begin{equation}
\begin{aligned}
    \langle\| x\|_{2}^2\rangle_{\mathcal{N}(0,1)} =& N\\
\text{and }    \langle\| x\|_{2}^4\rangle_{\mathcal{N}(0,1)} =& (N+2) N.
\end{aligned}
\end{equation}
For what concerns the infinity norm we can prove the following
\begin{lemma}
The expectation value of the square of the infinity norm over a normal distribution with $\sigma=1$ behaves for large values of N as

\begin{equation}
\langle \| x\|_{\infty}^k\rangle _{\mathcal{N}(0,1)}\sim 2^\frac{k}{2} \ln(N)^\frac{k}{2}
\end{equation}
\end{lemma}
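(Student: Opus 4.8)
The plan is to work directly with the random variable $M_N := \|\vc x\|_\infty = \max_{1\le i\le N}|x_i|$, the maximum of $N$ i.i.d.\ half-normal variables, and to show $\mathbb{E}[M_N^k]=(2\ln N)^{k/2}(1+o(1))$ by squeezing it between matching upper and lower bounds; this is the leading-order content of classical extreme-value theory for Gaussian maxima, but I would give a self-contained argument using only the elementary Gaussian tail estimate $\mathbb{P}(|Z|>t)\le e^{-t^2/2}$ (for $t$ bounded below by a small constant) and the Mills-ratio lower bound $\mathbb{P}(|Z|>t)\ge \frac{2t}{\sqrt{2\pi}(t^2+1)}e^{-t^2/2}$ for a standard normal $Z$.

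For the upper bound I would write $\mathbb{E}[M_N^k]=\int_0^\infty k t^{k-1}\,\mathbb{P}(M_N>t)\,dt$ and split at $t_\star:=\sqrt{2\ln N}$. On $[0,t_\star]$ the trivial bound $\mathbb{P}(M_N>t)\le 1$ contributes exactly $t_\star^k=(2\ln N)^{k/2}$. On $[t_\star,\infty)$ the union bound gives $\mathbb{P}(M_N>t)\le N\,\mathbb{P}(|Z|>t)\le N e^{-t^2/2}$, and the substitution $t=\sqrt{2\ln N}+s$ turns $Ne^{-t^2/2}$ into $e^{-s\sqrt{2\ln N}-s^2/2}\le e^{-s\sqrt{2\ln N}}$, so this tail is at most $\int_0^\infty k(\sqrt{2\ln N}+s)^{k-1}e^{-s\sqrt{2\ln N}}\,ds$, whose leading term is $O\big((\ln N)^{(k-2)/2}\big)$ --- a full power of $\ln N$ below the main term, hence negligible. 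This yields $\mathbb{E}[M_N^k]\le (2\ln N)^{k/2}(1+o(1))$.

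For the lower bound I would fix $\varepsilon\in(0,1)$ and use $\mathbb{E}[M_N^k]\ge \big((1-\varepsilon)\sqrt{2\ln N}\big)^k\,\mathbb{P}\big(M_N>(1-\varepsilon)\sqrt{2\ln N}\big)$. With $t_\varepsilon:=(1-\varepsilon)\sqrt{2\ln N}$ one has $\mathbb{P}(M_N\le t_\varepsilon)=(1-p_\varepsilon)^N\le e^{-Np_\varepsilon}$, and the Mills-ratio bound gives $p_\varepsilon:=\mathbb{P}(|Z|>t_\varepsilon)\ge c\,(\ln N)^{-1/2}\,e^{-t_\varepsilon^2/2}=c\,(\ln N)^{-1/2}\,N^{-(1-\varepsilon)^2}$, so $Np_\varepsilon\ge c\,(\ln N)^{-1/2}\,N^{\,2\varepsilon-\varepsilon^2}\to\infty$. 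Hence $\mathbb{P}(M_N>t_\varepsilon)\to 1$ and $\mathbb{E}[M_N^k]\ge(1-\varepsilon)^k(2\ln N)^{k/2}(1-o(1))$. Combining the two bounds and sending $\varepsilon\downarrow 0$ gives $\mathbb{E}[M_N^k]=(2\ln N)^{k/2}(1+o(1))=2^{k/2}(\ln N)^{k/2}(1+o(1))$, which is the claim; the stated cases $k=2$ and $k=4$ follow immediately.

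The main obstacle is the lower bound: one must check that the polynomially small prefactor $(\ln N)^{-1/2}$ from the Mills ratio does not destroy the divergence of $Np_\varepsilon$ (it does not, because the genuine power $N^{2\varepsilon-\varepsilon^2}$ dominates), and that for each fixed $\varepsilon$ the convergence $\mathbb{P}(M_N>t_\varepsilon)\to1$ is quantitative enough to be absorbed into the $1+o(1)$ factor before the limit $\varepsilon\downarrow 0$ is taken. The upper bound is, by contrast, a routine tail integral once the shift $t=\sqrt{2\ln N}+s$ is performed.
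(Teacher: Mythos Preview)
Your proof is correct and follows the standard extreme-value-theory route, but it differs substantially from the paper's argument. The paper writes $\mathbb{E}[M^k]=\int_0^\infty \mathbb{P}(M^k>u)\,du$, sets $g(u):=-\ln\bigl(1-2G(u^{1/k})\bigr)$ so that $\mathbb{P}(M^k>u)=1-e^{-Ng(u)}$, introduces two thresholds $u_N,v_N$ defined implicitly via $g(u_N)=(\ln N)/N$ and $g(v_N)=1/N$, inverts these relations asymptotically using the Lambert $W$ function to obtain $u_N,v_N\sim 2^{k/2}\ln^{k/2}N$, and then splits the integral into three pieces over $[0,u_N]$, $[u_N,v_N]$, and $[v_N,\infty)$, each estimated separately. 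Your approach is considerably more elementary: a single split at $t_\star=\sqrt{2\ln N}$ plus the union bound handles the upper tail, and a one-point lower bound with an $\varepsilon$-cushion handles the other direction, avoiding the Lambert-$W$ machinery entirely. The price you pay is that your lower bound, coming from a $\liminf$ argument over $\varepsilon\downarrow 0$, yields only the leading asymptotic $(2\ln N)^{k/2}$ with an unquantified $o(1)$ relative error, whereas the paper's three-interval decomposition tracks the next-order correction and delivers the sharper statement $\mathbb{E}[M^k]=2^{k/2}\ln^{k/2}N+\mathcal{O}\bigl(\ln^{(k-2)/2}N\cdot\ln\ln N\bigr)$, which it then uses downstream to control $\mathrm{Var}(\|x\|_\infty^2)$. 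For the lemma as stated, however, your argument is sufficient and cleaner.
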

\begin{proof}
Let $Z_i:=x_i$ and $M:=M_N:=\|x\|_{\infty}=\max _1^N\left|Z_i\right|$.
\begin{equation}
E [M^k]=\int_0^{\infty} P\left(M^k>u\right) d u
\end{equation}
So, for real $u>0$ we need to study 
\begin{equation}
\begin{aligned}
P\left(M^k>u\right)=&P(M>u^\frac{1}{k})\\
=&1-P\left(\max _1^N\left|Z_i\right| \leq u^\frac{1}{k}\right)\\
=&1-P\left(\left|Z_1\right| \leq u^\frac{1}{k}\right)^N
\end{aligned}
\end{equation}
We define
\begin{equation}
G(x):=P\left(Z_1>x\right) =1-\Phi(x)=\frac{1}{2}\text{erfc}(\frac{x}{\sqrt 2}),
\end{equation}
where 
\begin{equation}
\begin{aligned}
    \Phi(x)=P(Z_1\leq x)=&\frac{1}{2\pi}\int_{-\infty}^{x}e^{-\frac{t^2}{2}}\\
    =&\frac{1}{2}[1+\text{erf} (\frac{x}{\sqrt 2})]
\end{aligned}
\end{equation}
and 
\begin{equation}
    \text{erf}(x)+\text{erfc}(x)=1
\end{equation}
we write 
\begin{align}
P\left(\left|Z_1\right| \leq u^\frac{1}{k}\right)&=P(-u^\frac{1}{k} \leq Z_1 \leq u^\frac{1}{k})\nonumber\\
&=P(Z_1\leq u^\frac{1}{k})-P(Z_1\leq -u^\frac{1}{k}).
\end{align}
In terms of the cumulative distribution function (CDF), this becomes \cite{Handbook_of_math}
\begin{align}
&P(Z_1\leq u^\frac{1}{k})-P(Z_1\leq -u^\frac{1}{k})\nonumber\\
=&\Phi(u^\frac{1}{k}) -\Phi(-u^\frac{1}{k})\nonumber\\
=&\Phi(u^\frac{1}{k}) - (1-\Phi(u^\frac{1}{k}) )\nonumber\\
=&2\Phi(u^\frac{1}{k})-1\nonumber\\
=& 2(1-G(u^\frac{1}{k}))-1 \nonumber\\
=&1-2G(u^\frac{1}{k})
\end{align}
where we used the symmetry 
\begin{equation}
    \Phi(-x)=1-\Phi(x) .
\end{equation}
Substituting into the original expression we get
\begin{equation}
    P(M^2>u)= 1-(1-2 G(u^\frac{1}{k}))^N.
\end{equation}
The Taylor expansion of $\Phi(x)$ 
around $x=\infty$ gives \cite{Handbook_of_math} 
\begin{equation}
   \Phi(x) = 1-  e^{-x^2 / 2}\left(\frac{1}{x \sqrt{2 \pi}} +\mathcal{O}\left(\frac{1}{x^2}\right)\right)
\end{equation}
so that 
\begin{equation}
   G(x)=1-\Phi(x)= e^{-x^2 / 2}\left(\frac{1}{x \sqrt{2 \pi}} +\mathcal{O}\left(\frac{1}{x^2}\right)\right).
\end{equation}
We define the new function 
\begin{equation}
g(u):=-\ln (1-2 G(u^\frac{1}{k})) 
\end{equation}
such that 
\begin{equation}
     1-2G(u^\frac{1}{k})=e^{-g(u)}
\end{equation}
so that we can write 
\begin{equation}
   P(M^2>u)= 1-(1-2 G(u^\frac{1}{k}))^N=1-e^{-N g(u)}  .
\end{equation}
Fig~\ref{erf} shows the behavior of $e^{-N g(u)}$ for $N=32$.\\
For large $u$ we have 
\begin{align}
g(u)=&-\ln (1-2 G(u^\frac{1}{k}))\overset{u \to \infty}{=}2 G(u^\frac{1}{k})+\mathcal{O}(G^2) \nonumber\\
=&\exp\left(- {\frac{u^\frac{2}{k}}{2}}\right)\left(\sqrt \frac{2}{\pi}\frac{1}{u^{\frac{1}{k}}}+\mathcal{O}\left(\frac{1}{u^{\frac{k+1}{k}}}\right)\right)
\end{align}
and therefore we have 
\begin{align}
P(M^2>u)=&1-e^{-N g(u)} = Ng(u) +\mathcal{O}(g^2)\nonumber\\
=&N\exp(- {\frac{u^\frac{2}{k}}{2}})(\sqrt \frac{2}{\pi}\frac{1}{u^{\frac{1}{k}}}+\mathcal{O}(\frac{1}{u^{\frac{k+1}{k}}})).
\end{align}

\begin{figure}
    \centering
    \includegraphics[width=\linewidth]{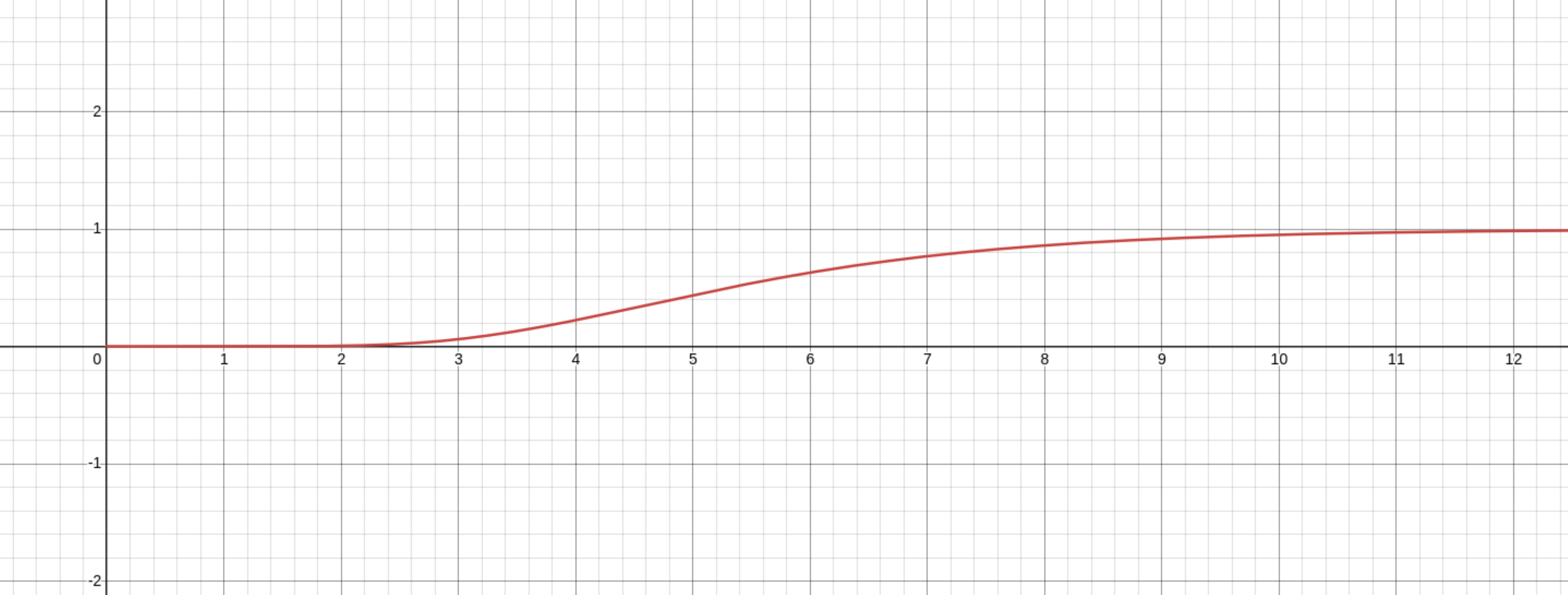}
    \caption{$y(u)=e^{-Ng(u)}=\text{erf}(\sqrt \frac{u}{ 2})^N$ for $N=32$}
    \label{erf}
\end{figure}

From the properties of the error function it follows that $g(u)$ decreases from $\infty$ to 0 as $u$ increases from 0 to $\infty$. 
Therefore, for each natural $N \geq 3$ there are unique positive real numbers $u_N$ and $v_N$ such that
\begin{equation}
g\left(u_N\right)=\frac{\ln N}{N}, \quad  g\left(v_N\right)=\frac{1}{N} .
\end{equation}
Clearly, $0<u_N<v_N<\infty$.
Moreover, for large N and therefore large $u_N $ and $v_N$ we can use $g(u)$ expansion to get the two equations 
\begin{align}
\frac{\ln N}{N}=&g\left(u_N\right)=\exp\left(- {\frac{u_N^\frac{2}{k}}{2}}\right)\left(\sqrt \frac{2}{\pi}\frac{1}{u_N^{\frac{1}{k}}}+\mathcal{O}\left(\frac{1}{u_N^{\frac{k+1}{k}}}\right)\right), \\
\intertext{and}
\frac{1}{N}=&g(v_N)=\exp\left(- {\frac{v_N^\frac{2}{k}}{2}}\right)\left(\sqrt \frac{2}{\pi}\frac{1}{v_N^{\frac{1}{k}}}+\mathcal{O}\left(\frac{1}{v_N^{\frac{k+1}{k}}}\right)\right). 
\end{align}
If we invert this expressions we get
\begin{equation}
e^t t \left[1+\mathcal{O}\left(\frac{f(N)}{e^\frac{t}{2}t^\frac{1+k}{2}}\right)\right]=\frac{2}{\pi }f(N)^2.
\end{equation}
where we define $t:= u^{\frac{2}{k}}$ and $f(N):=\frac{1}{g(u)}$.
Here we assume that $\frac{f(N)}{e^\frac{t}{2}t^\frac{1+k}{2}} \underset {N \to \infty}{\to 0}$.
We will later check if this assumption is consistent with the result.
Solving for the leading term, we have 
\begin{alignat}{2}
    &&e^t t =&\frac{2}{\pi }f(N)^2\\
    \Rightarrow&& t=&\mathcal{W}(\frac{2}{\pi }f(N)^2)
\end{alignat}
which implies 
\begin{equation}\label{A33}
 u(n)=\mathcal{W}\left(\frac{2}{\pi }f(N)^2\right)^{\frac{k}{2}},
\end{equation}
where $\mathcal{W}$ is the Lambert function. 
This well studied function has an asymptotic scaling given by \cite{corless1996}
\begin{equation}
    \mathcal{W}(x)=\ln(x)-\ln\ln(x)+o(1),
\end{equation}
which we can apply to get the approximation
\begin{align*}
u(N)&=[\ln(\frac{2}{\pi }f(N)^2)-\ln\ln(\frac{2}{\pi }f(N)^2)+o(1)]^\frac{k}{2}=\\
&=[2\ln(f(N))-\ln\ln(f(N))+\mathcal{O}(1)]^\frac{k}{2}.
\end{align*}
Using Newton multinomial formula \cite{Stanley2012} we find
\begin{equation}
\begin{aligned}
    u(N)=&2^\frac{k}{2}\ln^\frac{k}{2}(f(N))+\\
    &+
    \mathcal{O}(k2^{\frac{k}{2}-1}\ln^{\frac{k}{2}-1}(f(N))\ln\ln(f(N)),
\end{aligned}    
\end{equation}
which implies
\begin{align}
    u_N=&2^\frac{k}{2}\ln^\frac{k}{2}(N)+\mathcal{O}(k2^{\frac{k}{2}-1}\ln^{\frac{k}{2}-1}(N)\ln\ln(N))\\
    \intertext{and}
     v_N=&2^\frac{k}{2}\ln^\frac{k}{2}(N)+\mathcal{O}(k2^{\frac{k}{2}-1}\ln^{\frac{k}{2}-1}(N)\ln\ln(N)).
\end{align}
Next we consider the following expectation value
\begin{equation}
\begin{aligned}
E [M^2]=&\int_0^{\infty} P\left(M^2>u\right) du\\
=&\int_0^{\infty}\left(1-e^{-N g(u)}\right) du\\
=&I_1+I_2+I_3,
\end{aligned}
\end{equation}
where
\begin{align}
I_1:=&\int_0^{u_N}\left(1-e^{-N g(u)}\right) du,\\
I_2:=&\int_{u_N}^{v_N}\left(1-e^{-N g(u)}\right) du\\
\text{and }I_3:=&\int_{v_{\mathrm{N}}}^{\infty}\left(1-e^{-N g(u)}\right)du.
\end{align}
We will now focus on these three integrals one by one.
\subsubsection{Integral \texorpdfstring{$I_1$}{I1}}

If $0<u<u_N$, then $0<e^{-N g(u)}<e^{-N g\left(u_N\right)}=1 / N$ because, as shown in Fig.~\ref{erf}, this function is strictly increasing in that interval.
So,
\begin{align*}
u_n\geq I_1 =&
u_N-\int_0^{u_N}\left(e^{-N g(u)}\right) d u\\
\geq& u_N-\int_0^{u_N} \frac{1}{N}d u\\
=&u_n(1-\frac{1}{N})
\end{align*}
so that for large $N$
\begin{equation}
    I_1\sim  u_N =2^\frac{k}{2}\ln^\frac{k}{2}(N)+\mathcal{O}\left(k2^{\frac{k}{2}-1}\ln^{\frac{k}{2}-1}(N)\ln\ln(N) \right)\\
\end{equation}
\subsubsection {Integral \texorpdfstring{$I_2$}{I2}}
For the second integral we start by noticing that
\begin{equation}
\frac{1}{N}=e^{-N g(u_N)}\leq e^{-N g(u)}\leq e^{-N g(v_N)}=\frac{1}{e}
\end{equation}
from which we derive 
\begin{equation}
(v_N-u_N)(1-\frac{1}{e})\leq I_2 \leq (v_N-u_N)(1-\frac{1}{N})
\end{equation}
so that 
\begin{equation}
    I_2 \underset{N \to \infty}{\sim} v_N-u_N =\mathcal{O}(k2^{\frac{k}{2}-1}\ln^{\frac{k}{2}-1}(N)\ln\ln(N))
\end{equation}
\subsubsection {Integral \texorpdfstring{$I_3$}{I3}}
Using Eq.~(\ref{A33}) for large N and u (justified by the fact that $v_n \underset {N \to \infty} {\sim} 2\ln(N)$) we have 
\begin{align}
e^{-Ng(u)}\underset{u \to \infty} {\sim}& (1-e^{\frac{u}{2}}[\sqrt{\frac{2}{ \pi u }}+o(\frac{1}{u})])^N\nonumber\\
\underset{u,N \to \infty} {\sim}& 1-Ng(u)\nonumber\\
=&1-Ne^{\frac{-u}{2}}\sqrt{\frac{2}{ \pi u }}
\end{align}
\begin{align*}
I_3&\leq \int_{v_N}^{\infty} N g(u) d u \sim \int_{v_{\mathrm{N}}}^{\infty} N \frac{2}{\sqrt{2 \pi u}} e^{-u / 2} d u \\
&=2 N(1- \text{erf}(\sqrt{\frac{v_N}{2}}))=2Ng(v_N)=2
\end{align*}
We conclude that
\begin{align}
\langle \|x\|_{\infty}^k\rangle_{\mathcal{N}(0,1)}=&E [M^k]\nonumber\\
 =&I_1+I_2+I_3\nonumber\\
 =&2^\frac{k}{2}\ln^\frac{k}{2}(N)+\mathcal{O}\left(k 2^{\frac{k}{2}-1}\ln^{\frac{k}{2}-1}(N)\ln\ln(N)\right).
\end{align}
\end{proof}

In order to evaluate to evaluate Eqs.~(\ref{expectedratio}) and (\ref{varianceofratio}), we need the following results:
\begin{align}
\langle[\| x\|_{\infty}^2\rangle_{\mathcal{N}(0,1)}=& 2 \ln(N)+\mathcal{O}(\ln\ln(N)),\\
\langle\| x\|_{\infty}^4\rangle_{\mathcal{N}(0,1)}=& 4\ln^2(N)+\mathcal{O}(\ln(N)\ln\ln(N)),\\
\text{Var}(\|x\|_2^2)_{\mathcal{N}(0,1)}=&\langle\|x\|_2^4\rangle_{\mathcal{N}(0,1)}-\langle\|x\|_2^2\rangle_{\mathcal{N}(0,1)}^2\nonumber\\
=&N(N+2)-N^2=2N,\\
\text{Var}(\|x\|_{\infty}^2)_{\mathcal{N}(0,1)}=&\langle\|x\|_\infty^4\rangle_{\mathcal{N}(0,1)}-\langle\|x\|_\infty^2\rangle_{\mathcal{N}(0,1)}^2\nonumber\\
=&4\ln^2(N)+o(\ln^2(N))-4\ln^2(N)\nonumber\\
=&\mathcal{O}(\ln(N)\ln\ln(N)),
\end{align}
and 
\begin{align}
&\text{Cov}(\|x\|_\infty^2,\|x\|_2^2)_{\mathcal{N}(0,1)}\nonumber\\
=&\langle\|x\|_2^2\cdot\|x\|_\infty^{2}\rangle_{\mathcal{N}(0,1)}-\langle\|x\|_2^2\rangle_{\mathcal{N}(0,1)} \langle \|x\|_\infty^{2}\rangle_{\mathcal{N}(0,1)}.
\end{align}
Here we can use the covariance inequality
\begin{align}
\text{Cov}(\|x\|_\infty^2,n\|x\|_2^2)\leq&\sqrt{Var(\|x\|_\infty^2)Var(\|x\|_2^2)}\nonumber\\
=&\sqrt{(2N)o(\ln^2())}.
\end{align}
Substituting all of our results back into Eq.~(\ref{expectedratio}), for large N we have
\begin{align}
\langle \frac{\| x\|_{\infty}^2}{\|x\|_2^2}\rangle_{\mathcal{N}(0,1)}=&\frac{2\ln(N)}{N}(1-\frac{\sqrt{(2N)o(\ln^2(N))}}{2N\ln(N)}+\frac{2N} {N^2})\nonumber\\
=&\frac{2\ln(N)}{N}(1-o(1)).
\end{align}
\subsubsection{Variance}
For the variance we have the following
\begin{lemma}
\begin{equation}
\text{Var}(\frac{\| x\|_{\infty}^2}{\|x\|_2^2})_{\mathcal{N}(0,1)}= \mathcal{O}(\frac{\ln(N)}{N^2})
\end{equation}

\begin{proof}
we first need to compute the value of 
\begin{equation}
\langle\frac{\| x\|_{\infty}^4}{\|x\|_2^4}\rangle_{\mathcal{N}(0,1)}
\end{equation}
we can again use the expansion 
\begin{align}
\langle \frac{\| x\|_{\infty}^4}{\|x\|_2^4}\rangle\approx& \frac{\langle\| x\|_{\infty}^4\rangle}{\langle\|x\|_2^4\rangle}(1-\frac{\text{Cov}(\| x\|_{\infty}^4,\| x\|_{2}^4)}{\langle\| x\|_{\infty}^4\rangle\cdot \langle\| x\|_{2}^4\rangle}+\frac{Var(\| x\|_{2}^4)}{(\langle\| x\|_{2}^4\rangle)^2})\nonumber\\
=&\hphantom{+{}}\frac{4\ln^2(N)+\mathcal{O}(\ln(N))}{(N+2)N}(1+\nonumber\\
&-\frac{\text{Cov}(\| x\|_{\infty}^4,\| x\|_{2}^4)}{(4\ln^2(N)+\mathcal{O}(\ln^2(N)))\cdot ((N+2)N)}+\nonumber\\
&+\frac{Var(\| x\|_{2}^4)}{(N(N+2))^2})
\end{align}
We look at the value of 
\begin{align}
\text{Var}(\| x\|_{2}^4)_{\mathcal{N}(0,1)}&=\langle\| x\|_{2}^8\rangle_{\mathcal{N}(0,1)}-\langle\| x\|_{2}^4\rangle_{\mathcal{N}(0,1)}^2\nonumber\\
&=\mathcal{O}(N^3)
\end{align}
while
\begin{align}
\text{Var}(\| x\|_{\infty}^4)_{\mathcal{N}(0,1)}&=\langle\| x\|_{\infty}^8\rangle_{\mathcal{N}(0,1)}-\langle\| x\|_{\infty}^4\rangle_{\mathcal{N}(0,1)}^2\nonumber\\
&=16 \ln^4(N)-(4 \ln^2(N)+\mathcal{O}(\ln(N)))^2\nonumber\\
&=\mathcal{O}(\ln^3(N))
\end{align}
Finally for the covariance we have the inequality  \cite{mukhopadhyay2020probability}
\begin{align}
&\text{Cov}(\|x\|_\infty^4,\|x\|_2^4)_{\mathcal{N}(0,1)}\nonumber\\
\leq&
\sqrt{\text{Var}(\|x\|_\infty^4)_{\mathcal{N}(0,1)}\text{Var}(\|x\|_2^4)_{\mathcal{N}(0,1)}}
\nonumber\\
=&\mathcal{O}(N^\frac{3}{2}\ln^\frac{3}{2}(N))
\end{align}
therefore 
\begin{equation}
\frac{\text{Cov}(\|x\|_\infty^4,\|x\|_2^4)_{\mathcal{N}(0,1)}}{\langle\|x\|_2^4\rangle_{\mathcal{N}(0,1)} \langle\|x\|_\infty^{4}\rangle_{\mathcal{N}(0,1)}}\leq \frac{\mathcal{O}(N^\frac{3}{2}\ln^\frac{3}{2}(N))}{\mathcal{O}(N^2\ln^2(N))}
=o(1)
\end{equation}
For the remaining term in the approximation we have 
\begin{equation}
\frac{\text{Var}(\| x\|_{2}^4)_{\mathcal{N}(0,1)}}{\langle \| x\|_{2}^4\rangle_{\mathcal{N}(0,1)}^2}=\frac{4
N(N+2)(2N+3)}{(N+2)^2N^2}=o(1)
\end{equation}
So we can state that in the limit of large N 
\begin{equation}
\left\langle\frac{\| x\|_{\infty}^4}{\|x\|_2^4}\right\rangle_{\mathcal{N}(0,1)}=  \frac{4\ln^2(N))}{N^2}+\mathcal{O}(\frac{\ln(N)}{N^2}).
\end{equation}
We already know from before that $\left\langle \frac{\| x\|_{\infty}^2}{\|x\|_2^2}\right\rangle\sim \frac{4\ln^2(N))}{N^2}$, hence 
\begin{align}
&\text{Var}(\frac{\| x\|_{\infty}^2}{\|x\|_2^2})_{\mathcal{N}(0,1)}\nonumber\\
=&\langle \frac{\| x\|_{\infty}^4}{\|x\|_2^4}\rangle_{\mathcal{N}(0,1)} -\langle \frac{\| x\|_{\infty}^2}{\|x\|_2^2}\rangle_{\mathcal{N}(0,1)} ^2 \nonumber\\
=&\mathcal{O}(\frac{\ln(N)}{N^2}).
\end{align}
\end{proof}
\end{lemma}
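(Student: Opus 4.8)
The plan is to write $\text{Var}(Y)=\langle Y^2\rangle-\langle Y\rangle^2$ for $Y:=\|x\|_\infty^2/\|x\|_2^2$ and to exploit the fact that the leading terms of $\langle Y^2\rangle$ and of $\langle Y\rangle^2$ are both $4\ln^2(N)/N^2$, so that the bound on the variance arises entirely from the cancellation of these leading contributions, what survives being $\mathcal{O}(\ln(N)/N^2)$.

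For $\langle Y\rangle$ I would reuse the computation already carried out above: the ratio-of-means expansion (Eq.~(\ref{expectedratio})) together with the moment estimates for $\|x\|_\infty^2$ and $\|x\|_2^2$ gives $\langle Y\rangle=\frac{2\ln(N)}{N}(1-o(1))$. Tracking the two correction terms there — namely $\mathrm{Cov}(\|x\|_\infty^2,\|x\|_2^2)/\langle\|x\|_2^2\rangle^2$, bounded by the covariance inequality, and $\langle\|x\|_\infty^2\rangle\,\mathrm{Var}(\|x\|_2^2)/\langle\|x\|_2^2\rangle^3$ — one sees that the multiplicative error is $o(1)$ and small enough that $\langle Y\rangle^2=\frac{4\ln^2(N)}{N^2}+o(\ln(N)/N^2)$, i.e.\ negligible against the target order.

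The genuinely new work is $\langle Y^2\rangle=\langle\|x\|_\infty^4/\|x\|_2^4\rangle$. I would again apply the second-order Taylor expansion of a ratio about the mean values, now with numerator $\|x\|_\infty^4$ and denominator $\|x\|_2^4$:
\begin{equation}
\langle Y^2\rangle\approx\frac{\langle\|x\|_\infty^4\rangle}{\langle\|x\|_2^4\rangle}\left(1-\frac{\text{Cov}(\|x\|_\infty^4,\|x\|_2^4)}{\langle\|x\|_\infty^4\rangle\langle\|x\|_2^4\rangle}+\frac{\text{Var}(\|x\|_2^4)}{\langle\|x\|_2^4\rangle^2}\right).
\end{equation}
The inputs are: $\langle\|x\|_\infty^4\rangle=4\ln^2(N)+\mathcal{O}(\ln(N)\ln\ln(N))$ from the preceding Lemma with $k=4$; $\langle\|x\|_2^4\rangle=N(N+2)$ from the Chi moments; $\text{Var}(\|x\|_2^4)=\langle\|x\|_2^8\rangle-\langle\|x\|_2^4\rangle^2=\mathcal{O}(N^3)$; $\text{Var}(\|x\|_\infty^4)=\langle\|x\|_\infty^8\rangle-\langle\|x\|_\infty^4\rangle^2=\mathcal{O}(\ln^3(N))$, where the leading $16\ln^4(N)$ pieces cancel; and $\text{Cov}(\|x\|_\infty^4,\|x\|_2^4)\le\sqrt{\text{Var}(\|x\|_\infty^4)\text{Var}(\|x\|_2^4)}=\mathcal{O}(N^{3/2}\ln^{3/2}(N))$. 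Both bracketed corrections are then $o(1)$, so $\langle Y^2\rangle=\frac{4\ln^2(N)}{N^2}+\mathcal{O}(\ln(N)/N^2)$, and subtracting $\langle Y\rangle^2$ gives $\text{Var}(Y)=\mathcal{O}(\ln(N)/N^2)$.

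I expect the main obstacle to be the bookkeeping of error terms across the cancellation: since the $4\ln^2(N)/N^2$ contributions must annihilate exactly, every subleading piece — the $\mathcal{O}(\ln(N)\ln\ln(N))$ error in $\langle\|x\|_\infty^4\rangle$ divided by $N^2$, and the $o(1)$ correction factors in both ratio expansions — must be certified to sit at or below order $\ln(N)/N^2$ (up to $\ln\ln N$ factors). A secondary technical point is that the ratio-of-means expansion is only asymptotic, so one should separately verify that the neglected higher-order remainder is truly $o(\ln(N)/N^2)$; here the crude a priori bound $0\le Y\le 1$ together with the strong concentration of $\|x\|_2^2$ around $N$ (which keeps the denominator bounded away from $0$ with overwhelming probability) makes this routine.
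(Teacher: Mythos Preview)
Your proposal is correct and follows essentially the same approach as the paper: compute $\langle Y^2\rangle=\langle\|x\|_\infty^4/\|x\|_2^4\rangle$ via the second-order ratio-of-means expansion with the same moment inputs and covariance bound, obtain $4\ln^2(N)/N^2+\mathcal{O}(\ln(N)/N^2)$, and subtract $\langle Y\rangle^2$ so that the leading $4\ln^2(N)/N^2$ terms cancel. Your additional remarks about controlling the higher-order remainder of the Taylor expansion and tracking the $\ln\ln N$ factors through the cancellation are in fact more careful than the paper, which simply asserts the final subtraction is $\mathcal{O}(\ln(N)/N^2)$.
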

\subsubsection{Conclusions}
Summarizing the results we have
\begin{align}
\langle \| x\|_{\infty}^2\rangle_{\mathcal{S}_{N-1}} =&\langle \frac{\| x\|_{\infty}^2}{\|x\|_2^2}\rangle_{\mathcal{N}(0,1)}=\frac{2\ln(N)}{N}(1-o(1))\\
\text{Var}(\| x\|_{\infty}^2)_{\mathcal{S}_{N-1}}
=& (\text{Var}(\frac{\| x\|_{\infty}^2}{\|x\|_2^2})_{\mathcal{N}(0,1)}\nonumber\\
=&\langle \frac{\| x\|_{\infty}^4}{\|x\|_2^4}\rangle_{\mathcal{N}(0,1)}-\langle \frac{\| x\|_{\infty}^2}{\|x\|_2^2}\rangle_{\mathcal{N}(0,1)}^2 \nonumber\\
=& \mathcal{O}(\frac{\ln(N)}{N^2}),
\end{align}
which in terms of $\tau$ translates to 
\begin{equation}
E[\tau(\vc v)]\sim 2\ln(N)
\end{equation}
\begin{equation}
Var(\tau(\vc v))=\mathcal{O}(\ln(N))
\end{equation}

\section{Sparse Redundant input vectors}

If the input vector is sparse or has many repeated values, the complexity of the encoding circuit decreases. 
For example, for a sparse vector such as 
\begin{equation}
\vc{v}=\frac{1}{\sqrt {a^2+b^2}}(a,0,0,0,0,0,0,b)
\end{equation}
we only have three possible values for the entries, namely $\frac{a}{\sqrt {a^2+b^2}} $,$\frac{b}{\sqrt {a^2+b^2}} $ and $0$.
The initial state for the ancilla, $\ket{0}$ already corresponds to the value 0, hence we only need as many Index registers as the amount of non zero values in the input vector.
This reasoning can also be extended to redundant vectors: we substitute the value 0 with the most present value in the vector and we then  shift accordingly all the entries in the angle vector $\vc \theta$.

We can conclude that the memory cost of the algorithm is directly proportional to the amount of non zero (or more generally of different) entries in the input vector. Since, as explained before, we can always shift $\vc \theta $ by its most frequent value, the memory scaling of the circuit that encodes an input vector of size $N=2^n$ with $S$ different values is $\mathcal{O}(Sn)$.

\section{State Evolution of the encoding circuit}
We can now show in detail the evolution of the initial state under the action of the circuit with M=N. This is the most parallel version of the circuit and therefore the fastest and, at the same time, the most memory consuming one.

In this description we make use of the intermediate vector $\vc c$ defined as
\begin{equation}
    c_i:=\frac{w_i}{\|\vc w \|_\infty}
\end{equation}
The initial state is:
\begin{equation}
   \ket{\Psi_0}=\ket{0}_{SYS}\bigotimes_{j=0}^{N-1}[\ket{1^{\otimes n}}_{I_j}\ket{0}_{C_j}]\ket{0^{\otimes L}}_{CTRL}\ket{0}_{FLAG}
\end{equation} 
The first step consists of applying the Hadamard gate 
$H^{\otimes n}$ to the system register to create a superposition of all the index values, together with setting the parity check register to $\ket{1^{\otimes n}}$ using n X gates on it.
As stated before, we can generalize this step to uniform superpositions of N states where N is not a power of two using the procedure introduced in \cite{shukla2023efficient}
\begin{align}
\ket{\Psi_1} &= \frac{1}{\sqrt N} \sum_{k=0}^{N-1} \ket{k}_{SYS} \bigotimes_{j=0}^{N-1} [\ket{1^{\otimes n}}_{I_j} \ket{0}_{C_j}] \nonumber\\
&\quad \ket{0^{\otimes L}}_{CTRL} \ket{0}_{FLAG}
\end{align}
In the XOR operation described in eq.\ref{Xor_gate}, the system register is entangled with the parity check register using a series of n C-NOTs which use the system qubits as control and one of the Index registers as targets.
\begin{align}
\ket{\Psi_2} =&\text{XOR}\ket{\Psi_1}\nonumber\\
=& \frac{1}{\sqrt N} \sum_{k=0}^{N-1} \ket{k}_{SYS} \bigotimes_{j=0}^{N-1} [\ket{1^{\otimes n} \oplus k}_{I_j} \ket{0}_{C_j}] \nonumber\\
&\otimes \ket{0^{\otimes L}}_{CTRL} \ket{0}_{FLAG}
\end{align}
We now load into each Index register the corresponding binary value, after applying the LoadIndex gates to the state $\ket{\Psi_2}$ we get
\begin{align}
\ket{\Psi_3} &= \frac{1}{\sqrt N} \sum_{k=0}^{N-1} \ket{k}_{SYS} \bigotimes_{j=0}^{N-1} [\ket{1^{\otimes n} \oplus k \oplus j}_{I_j} \ket{0}_{C_j}]\nonumber \\
&\quad \ket{0^{\otimes L}}_{CTRL} \ket{0}_{FLAG}
\end{align}
The next step is to perform the logical AND by means of a n-Toffoli gate between each Index register and the correspondent Parity compression registers, which are the target qubits. We notice that the state $\ket{1^{\otimes n}\oplus k \oplus j}$  is equal to $\ket{1^{\otimes n}}$ only for $k=j$ \\
\begin{align}
\ket{\Psi_4} =& \frac{1}{\sqrt N} \sum_{k=0}^{N-1} \left\{ \ket{k}_{SYS} \bigotimes_{j=0}^{N-1} [\ket{1^{\otimes n} \oplus k \oplus j}_{I_j} \ket{\delta_{k,j}}_{C_j}] \right\}\nonumber \\
&\ket{0^{\otimes L}}_{CTRL} \ket{0}_{FLAG}
\end{align}
The fifth step consists of the Fan-in operation that uses a series of classically controlled C-NOTs that have the parity compression registers as controls and the CTRL register as target.
This operation associates to each index $k$ loaded into the SYS register the correspondent row of the binary matrix $l$.
\begin{align}
\ket{\Psi_5}=&\frac{1}{\sqrt N}\sum_{k=0}^{N-1}\left\{\ket{k}_{SYS}\bigotimes_{j=0}^{N-1}[\ket{1^{\otimes n}\oplus k \oplus j}_{I_j}\ket{\delta_{k,j}}_{C_j}]\right\}\nonumber\\
&\bigotimes_{r=0}^{L-1}\ket{l_{k,r}}_{CTRL}\ket{0}_{FLAG}
\end{align}
The last relevant step of the encoding are now the controlled rotations, that turn the uniform superposition of basis states into a weighted one, where the weights are the entries of $\vc{c}$.
\begin{align}
\ket{\Psi_6}=&\frac{1}{\sqrt N}\sum_{k=0}^{N-1} \{ \ket{k}_{\mathrm{SYS}}\bigotimes_{j=0}^{N-1}[\ket{1^{\otimes n}\oplus k \oplus j}_{I_j}\ket{\delta_{k,j}}_{C_j}]\nonumber\\
&\bigotimes_{r=0}^{L-1}\ket{l_{k,r}}_{\mathrm{CTRL}}((-1)^{l_{k,0}} \sqrt{1 - c_k^2}\ket{0}+c_k\ket{1})_{\mathrm{FLAG}}\}
\end{align}
Which can be written as
\begin{equation}
  \ket{\Psi_7}=\ket{\Psi_{(0)}}\ket{0}_{\mathrm{FLAG}}+\ket{\Psi_{(1)}}\ket{1}_{\mathrm{FLAG}},
\end{equation}
where
\begin{equation}
\begin{aligned}
 \ket{\Psi_{(0)}}=&\frac{1}{\sqrt {N }}\sum_{k=0}^{N-1}\{(-1)^{l_{k,0}} \sqrt{1 - c_k^2}\ket{k}_{SYS}\\
 &\bigotimes_{j=0}^{N-1}[\ket{1^{\otimes n}\oplus k \oplus j}_{I_j}\ket{\delta_{k,j}}_{C_j}]\bigotimes_{r=0}^{L-1}\ket{l_{k,r}}_{CTRL}\}
\end{aligned}
\end{equation}
and
\begin{equation}
\begin{aligned}
\ket{\Psi_{(1)}}=&\frac{1}{\sqrt {N }}\sum_{k=0}^{N-1}\{c_k\ket{k}_{SYS}\nonumber\\
&\bigotimes_{j=0}^{N-1}[\ket{1^{\otimes n}\oplus k \oplus j}_{I_j}\ket{\delta_{k,j}}_{C_j}]\bigotimes_{r=0}^{L-1}\ket{l_{k,r}}_{CTRL}\}.
\end{aligned}
\end{equation}
Taking out the normalization factor we notice that
\begin{align}\ket{\Psi_{(0)}}=&\frac{\sqrt{Nv_\text{max}^2-1}}{\sqrt {Nv_\text{max}^2}}\frac{1}{\sqrt {Nv_\text{max}^2-1}}\nonumber \\
&\sum_{k=0}^{N-1}(-1)^{l_{k,0}} \sqrt{v_\text{max}^2 - v_k^2}\ket{k}_{SYS}\nonumber\\
&\bigotimes_{j=0}^{N-1}[\ket{1^{\otimes n}\oplus k \oplus j}_{I_j}\ket{\delta_{k,j}}_{C_j}]\bigotimes_{r=0}^{L-1}\ket{l_{k,r}}_{CTRL}
\end{align}
and 
\begin{align}
 \ket{\Psi_{(1)}}=&\frac{1}{\sqrt {Nv_\text{max}^2}}\sum_{k=0}^{N-1}\{v_k\ket{k}_{SYS}\nonumber \\
&\bigotimes_{j=0}^{N-1}[\ket{1^{\otimes n}\oplus k \oplus j}_{I_j}\ket{\delta_{k,j}}_{C_j}]\bigotimes_{r=0}^{L-1}\ket{l_{k,r}}_{CTRL}\},
\end{align}
where we used that $c_k=\frac{v_k}{v_\text{max}}$ and $\sum_{k=0}^{N-1}v_k^2=1$.
Finally we can write the state as 
\begin{equation}
  \ket{\Psi_{7}}=\sqrt{1-\rho}\ket{\Psi_{7,B}}\ket{0}_{FLAG}+\sqrt{\rho}\ket{\Psi_{7,G}}\ket{1}_{FLAG},
\end{equation}
where 
\begin{align}
  \ket{\Psi_{7,B}}=&\frac{1}{\sqrt {1-\rho}}\ket{\Psi_{(0)}}\\
\text{and }\ket{\Psi_{7,G}}=&\frac{1}{\sqrt {\rho}}\ket{\Psi_{(1)}}
\end{align}
The following operations play the role of disentangling the SYS and the FLAG registers from all the others.
We will analyze their effect on the $\ket{\Psi_{B}}$ and $\ket{\Psi_{G}}$ states.

The inverse of the Fan-in simply restores the value of the CTRL register to all zeros, the inverses of the AND, of the Index loader, of the XOR and a final X line perform the same operation on the Parity compression and Index registers, so that we eventually have the states
\begin{align}
\ket{\Psi_{8,B}}=&\frac{1}{\sqrt {Nv_\text{max}^2-1}}\sum_{k=0}^{N-1}(-1)^{l_{k,0}} \sqrt{v_\text{max}^2 - v_k^2}\ket{k}_{SYS}\nonumber\\
&\bigotimes_{j=0}^{N-1}[\ket{0^{\otimes n}}_{I_j}\ket{0}_{C_j}]\ket{0^{\otimes L}}_{CTRL}
\end{align}
and 
\begin{align}
\ket{\Psi_{8,G}}=&\sum_{k=0}^{N-1}v_k\ket{k}_{SYS}\nonumber\\
&\bigotimes_{j=0}^{N-1}[\ket{0^{\otimes n}}_{I_j}\ket{0}_{C_j}]\ket{0^{\otimes L}}_{CTRL}.
\end{align}
In case of $M\neq N$, we would now be ready to encode a second chunk of $M$ entries of $\vc{w}$. 
Since here we are describing the most parallel version of the algorithm, all of the entries are encoded at once and we can trace out all the ancilla registers, except for the FLAG one, and obtain the state
\begin{equation}
  \ket{\Psi_{9}}=\sqrt{1-\rho}\ket{\Psi_{B}}_{SYS}\ket{0}_{FLAG}+\sqrt{\rho}\ket{\Psi_{G}}_{SYS}\ket{1}_{FLAG}
  \label{eq35}
\end{equation}
with
\begin{align}
\ket{\Psi_{B}}=&\frac{1}{\sqrt {Nv_\text{max}^2-1}}\sum_{k=0}^{N-1}(-1)^{l_{k,0}} \sqrt{v_\text{max}^2 - v_k^2}\ket{k}_{SYS}\\
\intertext{and }\ket{\Psi_{G}}=&\sum_{k=0}^{N-1}v_k\ket{k}_{SYS}.
\end{align}

\section{The amplitude amplification algorithm}\label{AppendixD}
In this supplementary material we provide the readers with a more detailed explanation of this specific application of the well known amplitude amplification algorithm \cite{Brassard_2002}. 
As usual, we repeatedly apply the Grover operator
 \begin{equation}
\mathcal{Q}=-\mathcal{E}\mathcal{S}_0\mathcal{E}^\dagger\mathcal{S}
\end{equation}
to the FLAG and SYS registers.
Here $\mathcal{S}$, which is usually referred to as the Oracle, flips the sign of the states we wish to amplify. In our case these are the ones with the FLAG register in the state $\ket{1}$, s.t. the action of the oracle is
\begin{equation}
 \begin{aligned}
     \mathcal{S}\ket{\Psi_G}=&-\ket{\Psi_G}
     \mathcal{S}\ket{\Psi_B}=&\ket{\Psi_B}
 \end{aligned}
\end{equation}

 In general, finding an efficient oracle representation is not a trivial task, but in this specific case it is just the $Z$ gate acting on the FLAG qubit.
 In the following we will indicate with F and S respectively the FLAG and SYS registers.
   \begin{equation}
     \mathcal{S}=Z_{F}\otimes I_{S}
 \end{equation}
 \begin{figure}[!tbph]%
    \centering%
    \includegraphics[width=0.9\linewidth]{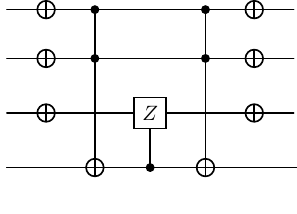}%
    \caption{This is the $S_0$ circuit for n=2. The controlled Z gate needs to act on the one among the n+1 non ancilla qubits which was not included in the n-Toffoli gate, but this choice doesn't change the effect of the gate. }%
    \label{S0}%
    \end{figure}%
 The gate $\mathcal{S}_0$ flips the sign of the all zero state $\ket{0}^{\otimes n+1}$, i.e.
   \begin{equation}
   \begin{aligned}
     \mathcal{S}_0\ket{x}=&-\ket{x},   &x=0^{n+1}\\
     \mathcal{S}_0\ket{x}=&\ket{x},   &x\neq0^{n+1},
     \end{aligned}
 \end{equation}
 which is equivalent to
    \begin{equation}
   \begin{split}
      \mathcal{S}_{0_{S,F}}=X_{S,F}(C^{0,..n-1}_{S}-Z_F)X_{S,F},
     \end{split}
 \end{equation}
 where $X_{S,F}$ means $X^{\otimes n+1}_{S,F}$, as in Fig.~\ref{S0}.
 Using an ancilla register A, we can write it as
\begin{align}   
\mathcal{S}_{0_{S,F,A}}
=&(X_{S,F}\otimes I_{A} )(C_{S}  I_{F}  X_A)(I_{S}\otimes Z_{F}C_{A} )\nonumber\\
&(C_{S}  I_{F}  X_A)(X_{F,S}\otimes I_{A} )
 \end{align}
, as shown in Fig~\ref{S0}.
The remaining $\mathcal{E}$ is our full encoding circuit seen as a gate.

This method is a generalization of Grover's algorithm, to which it reduces when we choose $\mathcal{E}$ to be $H^{\otimes(n+1)}$.
After m iterations the state of the system is
\begin{equation}
    Q^m \ket{\Psi}=sin[(2m+1)\theta)]\ket{\Psi_G}+cos[(2m+1)\theta)]\ket{\Phi_B}
\end{equation}
where $\sin(\theta) = \sqrt{\rho}$.
so the probability of measuring the desired state $\ket{\Psi_{G}}$ will be 
 \begin{equation}
 P^{m}_{G}=sin^2((2m+1)\theta))
\end{equation}
 The ideal number of iterations is therefore
 \begin{equation}
     m= \left\lfloor\frac{\pi}{4\theta}\right\rfloor=\left\lfloor\frac{\pi}{4 \arcsin(\sqrt{\rho})}\right\rfloor
 \end{equation} such that a good state (a basis state in the superposition $\ket{\Psi_{G}}$) is measured with a probability of at least $max(1-\rho, \rho)$.

 A visual interpretation of this algorithm, see Fig~\ref{amplitudeampgeom}, is similar to the one used for Grover's, consisting in a series of pairs of reflections of the state of the system, one with respect to the unwanted or bad state $\ket{\Psi_B}$ (orthogonal to $\ket{\Psi_{G}}$), which is performed by $S=(2\ket{\Psi}_B\bra{\Psi}_B-I)$, and the other one with respect to the initial state $\mathbf{\Psi}$ performed by $\mathcal{E}\mathcal{S}_0\mathcal{E^\dagger}=2\ket{\Psi}\bra{\Psi}-I.$ 
\begin{figure}[tbph]
    \centering
    \includegraphics[width=\linewidth]{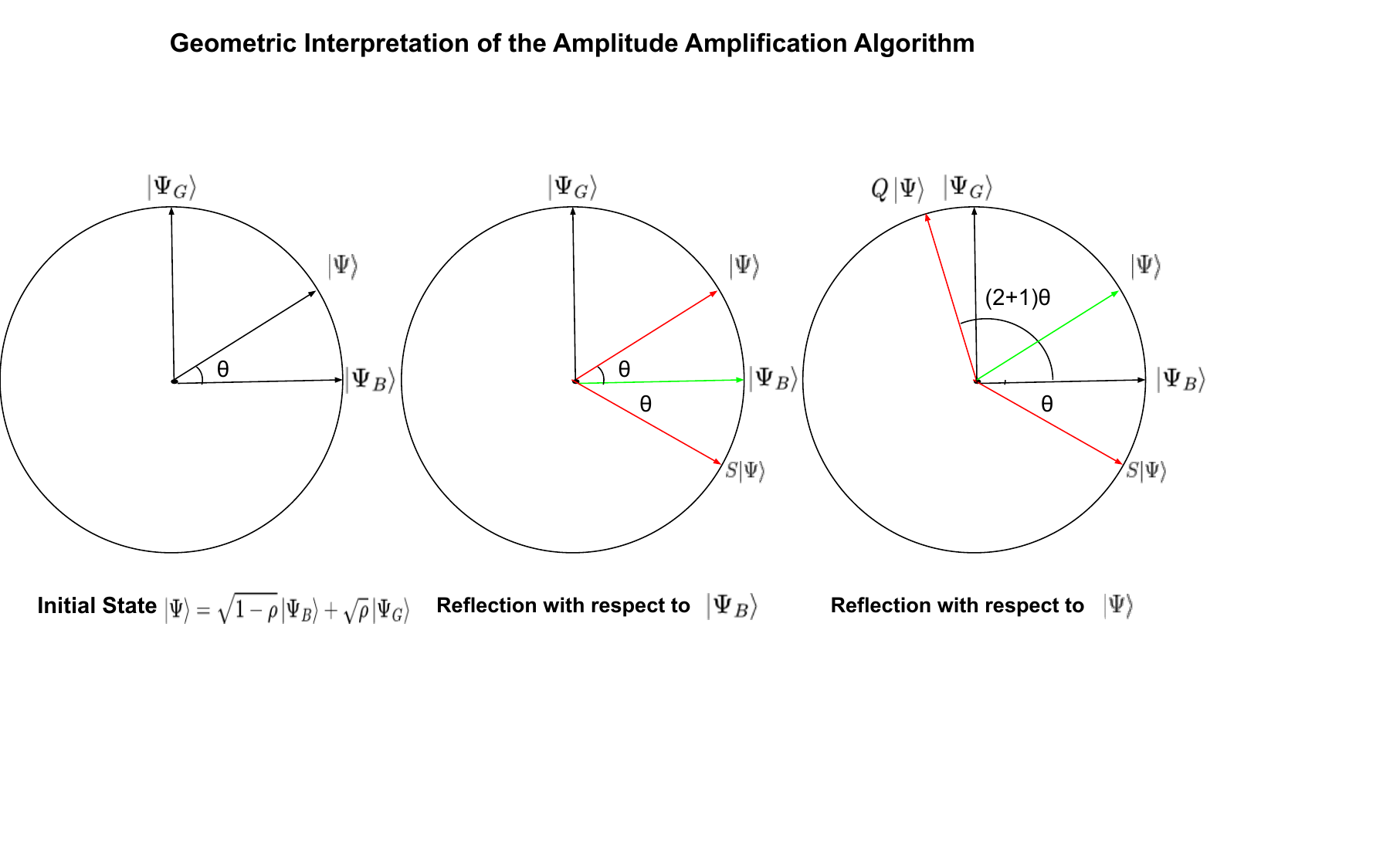}
    \caption{Geometric interpretation of the amplitude amplification algorithm.}
    \label{amplitudeampgeom}
\end{figure}

\end{document}